\newtheorem{theorem}{Theorem}
\newtheorem{lemma}{Lemma}
\newtheorem{prop}{Proposition}
\newtheorem{defin}{Definition}
\newtheorem{rem}{Remark}
\newcommand{\sfC}{\mathsf{C}}
\newcommand{\cN}{\mathcal{N}}
\newcommand{\cS}{\mathcal{S}}
\newcommand{\bz}{\mathbf{z}}
\begin{document}
\title{Best Relay Selection in Gaussian Half-Duplex Diamond Networks} 
\author{
Sarthak Jain, Soheil Mohajer, Martina Cardone
\thanks{
The authors are with the Electrical and Computer Engineering Department, University of Minnesota, Twin Cities, MN 55404 (e-mail: \{jain0122, soheil, cardo089\}@umn.edu). This research was supported by NSF under Award \#1907785.
The results in this paper were submitted in part to the 2020 IEEE International Symposium on Information Theory.
}
}
\maketitle
 
\begin{abstract}
This paper considers Gaussian half-duplex diamond $n$-relay networks, where a source communicates with a destination by hopping information through one layer of $n$ non-communicating relays that operate in half-duplex.
The main focus consists of investigating the following question: What is the contribution of a single relay on the {\em approximate capacity} of the entire network?
In particular, {\em approximate capacity} refers to a quantity that approximates the Shannon capacity within an additive gap which only depends on $n$, and is independent of the channel parameters.
This paper answers the above question by providing a fundamental bound on the {\em ratio} between the approximate capacity of the highest-performing single relay and the approximate capacity of the entire network, for any number $n$.
Surprisingly, it is shown that such a ratio guarantee is $f = 1/(2+2\cos(2\pi/(n+2)))$, that is a sinusoidal function of $n$, which decreases as $n$ increases.
It is also shown that the aforementioned ratio guarantee is {\em tight}, i.e., there exist Gaussian half-duplex diamond $n$-relay networks, where the highest-performing relay has an approximate capacity equal to an $f$ fraction of the approximate capacity of the entire network.
\end{abstract}


\begin{IEEEkeywords}
Half-duplex, approximate capacity, diamond network, relay selection. 
\end{IEEEkeywords}

\section{Introduction}
Relaying is foreseen to play a key role in the next generation technology,
promising performance enhancement of several 
components of the evolving 5G architecture, such as 
vehicular communication~\cite{AktasGlobecom2016,ScheimCOMCAS2013}, millimeter wave communication~\cite{BiswasJSAC2016,YongWNET2015} and unmanned aerial vehicles communication~\cite{Mozaffari2016,Zeng2016}.
Relays can be classified into two main categories, namely {\em full-duplex} and {\em half-duplex}. While a full-duplex relay can simultaneously receive and transmit over the same time/frequency channel, a half-duplex relay has to use different times/bands for transmission and reception. 
When a node operates in full-duplex, several practical restrictions arise, among all how to properly cancel the self-interference~\cite{DuarteTVT2014,EverettTWC2016,MayankMobicom2011}.
This operation might also require a significant energy consumption which cannot be sustained in scenarios where low-cost communication modules are needed and nodes have limited power supply. Given this, it is expected that half-duplex will still represent the predominant technology for next generation wireless networks~\cite{WangIEEEMag2017}.

In wireless networks with relays, several practical challenges arise.
For instance, relays must synchronize for reception and transmission, which might result in a highly-complex process.
Moreover, operating all the relays might bring to a severe power consumption, which cannot be sustained. 
With the goal of offering a suitable solution for these practical considerations, in~\cite{NazarogluTIT2014} the authors pioneered the so-called {\em wireless network simplification} problem, this problem seeks to provide fundamental guarantees on the amount of the capacity of the entire network that can be retained when only a subset of the available relays is operated.

In this paper, we investigate the network simplification problem in 
Gaussian half-duplex diamond $n$-relay networks, where a source communicates with a destination by hopping information through a layer of $n$ non-communicating half-duplex relays.
Our main result consists of deriving a fundamental bound on the amount of the {\em approximate capacity}\footnote{As we will thoroughly explain in Section~\ref{sec:headings}, approximate capacity refers to a quantity that approximates the Shannon capacity within an additive gap which only depends on $n$, and is independent of the channel parameters.} of the entire network that can be retained when only one relay is operated.
This bound amounts to 
$f = \frac{1}{2+2\cos(2\pi/(n+2))}$,
i.e., a fraction $f$ of the approximate capacity of the entire network can always be retained by operating a single relay. The merit of this result is to provide fundamental trade-off guarantees between network resource utilization 
and network capacity. For instance, assume a Gaussian half-duplex diamond network with $n=3$ relays. Our result shows that if one wants to achieve $38 \%$ (or less) of the approximate capacity of the entire network, then it suffices to use only one relay, whereas if larger rates are desirable then it might be needed to operate two or three relays.
We also show that the guarantee $f$ is tight, i.e., there exist Gaussian half-duplex diamond $n$-relay networks where the highest-performing relay has an approximate capacity equal to $f$ of the approximate capacity of the entire network. To prove this result, we provide two network constructions (one for even values of $n$ and the other for odd values of $n$) for which this guarantee is tight.

\subsection{Related Work}
Characterizing the Shannon capacity for wireless relay networks is a long-standing open problem.
In recent years, several approximations for the Shannon capacity have been proposed among which the {\em constant gap} approach stands out~\cite{AvestimehrIT2011,OzgurIT2013,LimIT2011,LimISIT2014,CardoneIT2014}.
The main merit of these works is to provide an approximation that is at most an additive gap away from the Shannon capacity; this gap is only a function of the number of relays $n$, and it is independent of the values of the channel parameters; because of this property, this gap is said to be constant.
In the remaining part of the paper, we refer to such an approximation as {\em approximate capacity}.

In a half-duplex wireless network with $n$ relays, at each point on time, each relay can either receive or transmit, but not both simultaneously. 
Thus, it follows that the network can be operated in $2^n$ possible receive/transmit states, depending on the activity of each relay.
In~\cite{CardoneTIT2016}, the authors proved a surprising result: it suffices to operate any Gaussian half-duplex $n$-relay network with arbitrary topology in at most $n+1$ states (out of the $2^n$ possible ones) in order to characterize its approximate capacity.
This result generalizes the results in~\cite{BagheriISIT2010},~\cite{BrahmaISIT2012} and~\cite{BrahmaISIT2014}, which were specific to Gaussian half-duplex diamond relay networks with limited number of relays $n$.
This line of work has given rise to the following question: Can these $n+1$ states and the corresponding approximate capacity be found in polynomial time in $n$? 
The answer to this question is open in general, 
and it is known only for {\em paths}, i.e., the so-called line networks~\cite{EzzeldinISIT2017}, and for a specific class of layered networks~\cite{EtkindTIT2014}.
Recently, in~\cite{JainISIT2019}, the authors discovered sufficient conditions for Gaussian half-duplex $n$-relay diamond networks, which guarantee that the approximate capacity, as well as a corresponding set of $n+1$ optimal states, can be found in polynomial time in $n$.

In this work, we are interested in providing fundamental guarantees on the approximate capacity of the entire network that can be retained when only one relay is operated.
This problem was first formulated in~\cite{NazarogluTIT2014} for Gaussian full-duplex $n$-relay diamond networks: it was proved that there always exists a sub-network of $k\leq n$ relays that achieves at least a fraction of $k/(k+1)$ of the approximate capacity of the entire network.
Moreover, the authors showed that this bound is tight, i.e., there exist Gaussian full-duplex $n$-relay diamond networks in which the highest-performing sub-network of $k$ relays has an approximate capacity equal to $k/(k+1)$ of the approximate capacity of the entire network.
Recently, in~\cite{EzzeldinISIT2016} the authors analyzed the guarantee of selecting the highest-performing path in Gaussian full-duplex $n$-relay networks with arbitrary layered topology.
Very few results exist on the network simplification problem in half-duplex networks.
In~\cite{BrahmaISIT2014Second}, the authors showed that in any Gaussian half-duplex $n$-relay diamond network, there always exists a $2$-relay sub-network that has approximate capacity at least equal to $1/2$ of the approximate capacity of the entire network.
Recently, in~\cite{EzzeldinIT2019} the authors proved a tight guarantee for Gaussian half-duplex $n$-relay diamond network: there always exists an $(n-1)$-relay sub-network that retains at least $(n-1)/n$ of the approximate capacity of the entire network. Moreover, they showed that when $n \gg 1$, then for $k=1$ and $k=2$ this guarantee becomes $1/4$ and $1/2$, respectively, i.e., the fraction guarantee decreases as $n$ increases.
These results are fundamentally different from full-duplex~\cite{NazarogluTIT2014}, where the ratio guarantee 
is independent of $n$. 
The main merit of our work is to provide an answer to a question that was left open in~\cite{EzzeldinIT2019}, namely: What is the fundamental guarantee (in terms of ratio) when $k=1$ relay is operated, as a function of $n$?

\subsection{Paper Organization.} 
Section~\ref{sec:headings} describes the Gaussian half-duplex diamond relay network, and defines its approximate capacity.
Section~\ref{sec:MainRes} presents the main result of the paper, by providing a tight bound on the approximate capacity of the best relay with respect to the  entire network approximate capacity.
Section~\ref{sec:ProofMainTh} provides the proof of the bound, and
Section~\ref{sec:TightTh} presents some network realizations (for even and odd numbers of relays) that satisfy the bound with equality, hence showing that the ratio proved in Section~\ref{sec:ProofMainTh} is tight. 
Some of the more technical proofs are in the Appendix.

\section{Network Model}
\label{sec:headings}
\noindent {\em Notation.} 
For two integers $n_1$ and $n_2 \geq n_1$, $[n_1:n_2]$ indicates the set of integers from $n_1$ to $n_2$.
For a complex number $a$, $|a|$ denotes the magnitude of $a$.
Calligraphic letters (e.g., $\mathcal{A}$) denote sets. For two sets $\mathcal{A}$ and $\mathcal{B}$,  $\mathcal{A} \subseteq \mathcal{B}$ indicates that $\mathcal{A}$ is a subset of $\mathcal{B}$, and $\mathcal{A} \cap \mathcal{B}$ denotes the intersection between $\mathcal{A}$ and $\mathcal{B}$. The complement of a set $\mathcal{A}$ is indicated as $\mathcal{A}^c$; $\varnothing$ is the empty set. $\mathbb{E}[\cdot]$ denotes the expected value. Finally,  $\lfloor x \rfloor$ is the floor of $x$.

\begin{figure}[t]
\includegraphics[width=0.42\columnwidth]{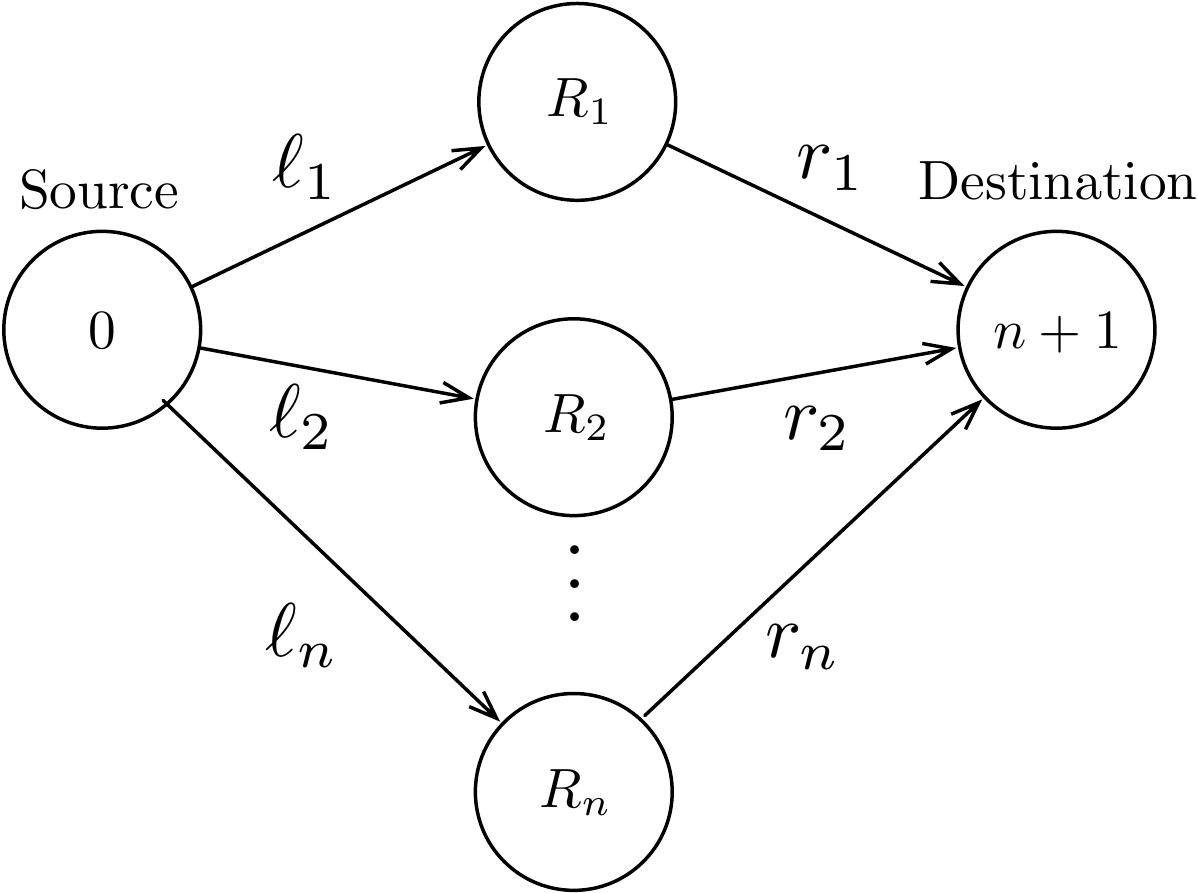}
\centering
\caption{Gaussian half-duplex diamond network with $n$ relays. }
\label{fig1}
\vspace{-5mm}
\end{figure}

The Gaussian half-duplex diamond $n$-relay network $\mathcal{N} $ consists of two hops (and three layers of nodes), as shown in Fig.~\ref{fig1}: the broadcast hop between the source (node $0$) and the set of $n$ relays $\{R_1,R_2,...,R_n\}$; and the multiple access hop between the relays $\{R_1,R_2,...,R_n\}$ and the destination (node $n+1$). The $n$ relays are assumed to be non-interfering, and the source can communicate to the destination only by hopping information through the relays, i.e., 
there is no direct link from the source to the destination. 
Relays operate in half-duplex mode, i.e., at any given time they can either receive or transmit, but not both simultaneously.  The input/output relationship for the Gaussian half-duplex diamond $n$-relay network at time $t$ is defined as
\begin{subequations}
\label{eq:Diam}
\begin{align}
Y_{i,t}&=(1-S_{i,t}) (h_{si} X_{0,t}+Z_{i,t}), \quad \forall i \in [1:n],
\\ Y_{n+1,t}&=\sum_{i=1}^{n} S_{i,t} h_{id} X_{i,t} + Z_{n+1,t},
\end{align}
\end{subequations}
where: (i) $S_{i,t}$ is a binary variable that indicates the state of relay $R_i$ at time $t$; specifically, $S_{i,t}=0$ means that relay $R_i$ is in receiving mode at time $t$, and $S_{i,t}=1$ means that relay $R_i$ is in transmitting mode at time $t$; 
(ii) $X_{i,t}, \ \forall i \in [0:n]$ is the channel input of node $i$ at time $t$ that satisfies the unit average power constraint $\mathbb{E}[|X_{i,t}|^2] \leq 1$; (iii) $h_{si}$ and $h_{id}$ are the {\em time-invariant}\footnote{The channel coefficients are assumed to remain constant for the entire transmission duration, and hence they are known to all the nodes in the network.} complex 
channel gains  from the source to relay $R_i$ and from relay $R_i$ to the destination, respectively; (iv) $Z_{i,t} \; , i \in [1:n+1]$ is the complex additive white Gaussian noise at node $i$; noises are independent and identically distributed as $\mathcal{C} \mathcal{N}(0,1)$;
and finally (v) $Y_{i,t}, \ \forall i \in [1:n+1]$ is the received signal by node $i$ at time instance $t$. 

The Shannon capacity (a.k.a. the maximum amount of information flow) for the Gaussian half-duplex diamond $n$-relay network in~\eqref{eq:Diam} is unknown in general, and hence its computation is notoriously an open problem (even for the case of one relay). However, it is known that the cut-set bound provides an upper bound on the channel capacity \cite{cover2012elements}. 
Moreover, several relaying schemes, such as Quantize-Map-and-Forward (QMF)~\cite{OzgurIT2013} and Noisy Network Coding (NNC)~\cite{LimIT2011} have been shown to achieve rates that are within a {\em  constant additive gap} 
from the Shannon capacity. We continue with the following definition. 
\begin{defin} 
\label{def:ConstGap}
For the Gaussian half-duplex diamond $n$-relay network described in~\eqref{eq:Diam}, define 
\begin{equation} \label{maxmincut}
\begin{split}
   \mathsf{C}_n({\mathcal{N}})= \max_{\bm{\lambda}}  &  \ t \\
    \text{s.t. } & \ t \leq \sum_{\mathcal{S}\subseteq{[1:n]}} \lambda_{\mathcal{S}}\bigg( \max_{i \in \mathcal{S}^c \cap \Omega^c} \ell_i +\max_{i \in \mathcal{S} \cap \Omega} r_i \bigg),  \quad \forall   \Omega \subseteq [1:n],\\
     & \ \sum_{\mathcal{S}\subseteq{[1:n]}} \lambda_{\mathcal{S}}=1, \; \lambda_{\mathcal{S}} \geq 0, \quad \forall  \mathcal{S} \subseteq [1:n], \\
\end{split}
\end{equation}
where, $\forall i \in [1:n]$,
\begin{equation*}
    \ell_i=\log (1+|h_{si}|^2), \;\; r_i=\log (1+|h_{id}|^2).
\end{equation*}
\end{defin} 
In the above definition, $\ell_i$ and $r_i$ are the point-to-point capacities of the link from the source to relay $R_i$ and the link from relay $R_i$ to the destination, respectively. 
Moreover, in~\eqref{maxmincut}, we have that:
(i) $\mathcal{S} \subseteq [1:n]$ corresponds to the state of the network in which the relays $R_i, i \in \mathcal{S},$ are in transmitting mode, while the rest of the relays are in receiving mode;
(ii) $\lambda_{\mathcal{S}}$ denotes the fraction of time that the network operates in state $\mathcal{S}$;
(iii) $\bm{\lambda}$ is the vector obtained by stacking together $\lambda_{\mathcal{S}}, \forall \mathcal{S} \subseteq [1:n]$, and is referred to as a {\em schedule} of the network;
(iv) $\Omega \subseteq [1:n]$ is used to denote a partition of the relays in the `side of the source', i.e., $\{0\} \cup \Omega$ is a cut of the network; 
similarly, $\Omega^c = [1:n] \setminus \Omega$  denotes a partition of the relays in the `side of the destination';
note that, for a relay $R_i, i \in \Omega,$ to contribute to the flow of information we also need $i \in \mathcal{S}$; similarly, for a relay $R_i, i \in \Omega^c,$ to contribute to the flow of information we also need $i \in \mathcal{S}^c$.

The following proposition is a consequence of~\cite{AvestimehrIT2011, OzgurIT2013,CardoneIT2014}, and shows that $\mathsf{C}_n({\mathcal{N}})$
in Definition~\ref{def:ConstGap} is within a constant additive gap from the Shannon capacity. 
Because of this property, in the remaining of the paper we refer to $\mathsf{C}_n({\mathcal{N}})$ as {\em approximate capacity}.

\begin{prop} 
Let $\mathsf{C}^G_n({\mathcal{N}})$ be the Shannon capacity of the Gaussian half-duplex diamond $n$-relay network $\mathcal{N}$ in~\eqref{eq:Diam}, and $\mathsf{C}_n({\mathcal{N}})$ be the quantity 
defined in Definition~\ref{def:ConstGap}. Then, 
\begin{align*}
\left |\mathsf{C}^G_n({\mathcal{N}}) - \mathsf{C}_n({\mathcal{N}}) \right| \leq \kappa_n,
\end{align*}
where $\kappa_n$ only depends on the number of relays $n$, and is independent of the channel coefficients.
\end{prop}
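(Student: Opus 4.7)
The plan is to sandwich $\mathsf{C}^G_n(\mathcal{N})$ between an upper bound coming from the cut-set bound and a lower bound coming from a known relaying scheme, and to show that both agree with $\mathsf{C}_n(\mathcal{N})$ up to an additive quantity that depends only on $n$.

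For the upper bound, I would start from the half-duplex cut-set bound. Because each relay is either in listening or transmitting mode, one can model the state vector $(S_{1,t},\ldots,S_{n,t})$ as a random variable taking values in the $2^n$ subsets $\mathcal{S}\subseteq[1{:}n]$, with time-share $\lambda_\mathcal{S}=\Pr(\text{state}=\mathcal{S})$. Conditioning the cut-set on this state variable, each cut $\{0\}\cup\Omega$ of the network contributes, for state $\mathcal{S}$, a mutual information across the broadcast cut (from source to the relays in $\Omega^c\cap\mathcal{S}^c$) plus a mutual information across the multiple-access cut (from the relays in $\Omega\cap\mathcal{S}$ to the destination). Maximizing over Gaussian inputs with unit power and using the standard bound that an MISO channel with $k$ transmit antennas has capacity at most $\max_i \log(1+|h_i|^2)+\log k$, and analogously for the SIMO side, each of these two mutual informations is bounded by $\max_{i\in\Omega^c\cap\mathcal{S}^c}\ell_i$ and $\max_{i\in\Omega\cap\mathcal{S}}r_i$ respectively, plus a constant at most $\log(n+1)$. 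Averaging over $\mathcal{S}$ with weights $\lambda_\mathcal{S}$ and minimizing over $\Omega$ yields exactly the linear program in~\eqref{maxmincut}, up to an additive term bounded by a function of $n$ alone.

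For the lower bound, I would invoke the achievability guarantees of QMF~\cite{OzgurIT2013} or NNC~\cite{LimIT2011} for Gaussian relay networks. These schemes achieve, within a constant additive gap that depends only on the total number of nodes, the cut-set value evaluated at independent Gaussian inputs. In the half-duplex setting, the random-switching/coded-time-sharing extension of QMF (see~\cite{CardoneIT2014,AvestimehrIT2011}) achieves every rate below the same min-max expression, again losing only a function of $n$ due to the quantization step at each relay. Specialized to the diamond topology, each cut contribution reduces to a broadcast part plus a multiple-access part, which are each within $O(\log n)$ of the $\max$ terms appearing in Definition~\ref{def:ConstGap}.

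Combining these two directions gives $|\mathsf{C}^G_n(\mathcal{N})-\mathsf{C}_n(\mathcal{N})|\leq \kappa_n$, where $\kappa_n$ absorbs the broadcast/MAC log-$n$ overhead from both bounds plus the constant QMF/NNC penalty; crucially, $\kappa_n$ is independent of the $h_{si},h_{id}$. The only real obstacle is the careful bookkeeping in the half-duplex LP: one must check that the same schedule $\bm\lambda$ that is optimal in the upper bound can be realized by the achievability scheme, so that the two sides truly meet on the same polytope. This is exactly the content of the half-duplex approximate-capacity theorems in~\cite{AvestimehrIT2011,OzgurIT2013,CardoneIT2014}, so the proof ultimately reduces to a direct citation of those results with the diamond-topology simplification plugged in.
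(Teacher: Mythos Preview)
Your proposal is correct and matches the paper's own treatment: the paper does not give an independent proof of this proposition but simply states it as a consequence of~\cite{AvestimehrIT2011,OzgurIT2013,CardoneIT2014}, which is exactly where your argument lands after sketching the cut-set upper bound and the QMF/NNC lower bound. Your outline is in fact more detailed than what the paper provides, but the substance---sandwiching $\mathsf{C}^G_n(\mathcal{N})$ between the half-duplex cut-set bound and the achievable rate of QMF/NNC, both of which coincide with $\mathsf{C}_n(\mathcal{N})$ up to an $n$-dependent constant---is precisely the content of those cited references specialized to the diamond topology.
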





The optimization problem in~\eqref{maxmincut} seeks to maximize the source-destination information flow. This can be computed as the minimum flow across all the network cuts. Moreover, each relay can be scheduled for reception/transmission so as to maximize the information flow.
It therefore follows that the optimization problem in~\eqref{maxmincut} is a linear optimization problem with $O(2^n)$ constraints (corresponding to the $2^n$ network cuts $\Omega \subseteq [1:n]$), and $O(2^n)$ variables (corresponding to the $2^n$ network states $\mathcal{S} \subseteq [1:n]$). 
In what follows, we illustrate this through a simple example.

\noindent
{\bf Example.}
Consider a Gaussian half-duplex diamond network with $n=2$.
Then, for this network there are $2^2=4$ possible cuts (as shown in Fig.~\ref{fig2}), each of which is a function of $2^2=4$ possible receive/transmit states (i.e., $R_1$ and $R_2$ are in receiving mode, $R_1$ and $R_2$ are in transmitting mode, one among $R_1$ and $R_2$ is in receiving mode and the other in transmitting mode).
Then, the optimization problem in~\eqref{maxmincut} will have the following constraints
\begin{eqnarray}
\begin{split}
  &\text{For } \Omega=\varnothing: &&t \leq \max(\ell_1,\ell_2) \lambda_{\varnothing}+ \ell_2   \lambda_{\{1\}}+ \ell_1   \lambda_{\{2\}}+ 0  \lambda_{\{1,2\}},   \\
   &\text{For } \Omega=\{1\}: &&t \leq \ell_2   \lambda_{\varnothing}+ (\ell_2+r_1)  \lambda_{\{1\}}+ 0 \; \lambda_{\{2\}}+ r_1 \; \lambda_{\{1,2\}},  \\
   &\text{For } \Omega=\{2\}: &&t \leq \ell_1   \lambda_{\varnothing}+ 0   \lambda_{\{1\}}+ (\ell_1+r_2) \; \lambda_{\{2\}}+ r_2 \; \lambda_{\{1,2\}}, \\
   &\text{For } \Omega=\{1,2\}: &&t \leq 0   \lambda_{\varnothing}+ r_1   \lambda_{\{1\}}+ r_2 \; \lambda_{\{2\}}+ \max(r_1,r_2) \; \lambda_{\{1,2\}}, \\
   &\text{Sum of } \bm{\lambda}: &&1= \lambda_{\varnothing}+\lambda_{\{1\}}+\lambda_{\{2\}}+\lambda_{\{1,2\}} , \\
&\text{Non-negativity of } \bm{\lambda} : && \bm{\lambda}  \geq 0. 
\end{split}
\end{eqnarray}
\begin{figure}[t] 
\includegraphics[width=0.8\columnwidth]{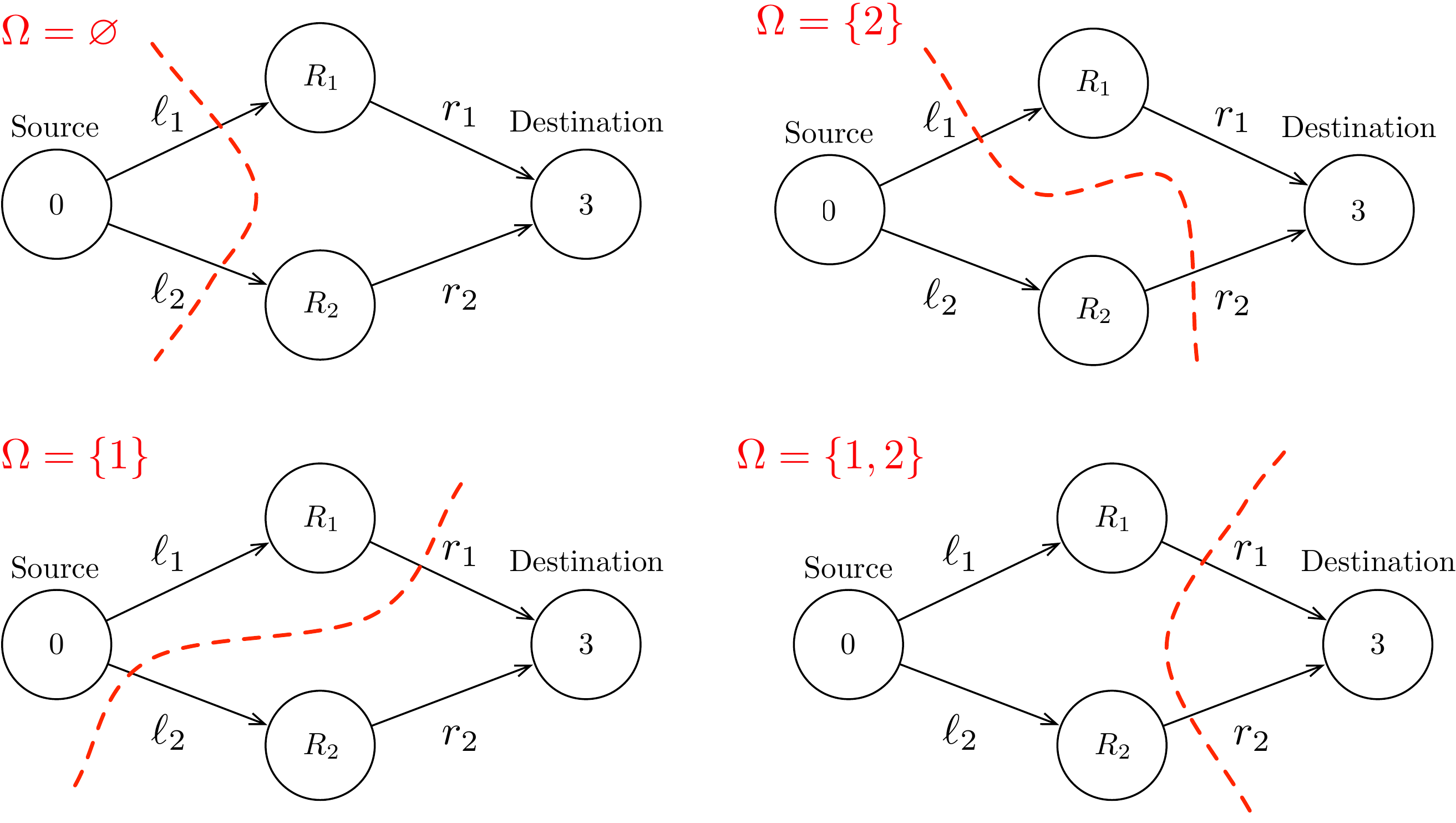}
\centering
\caption{The $4$ possible cuts in Gaussian half-duplex diamond networks with $n=2$ relays.}
\label{fig2}
\vspace{-5mm}
\end{figure}
%
%
%
%
%

\section{Problem Statement and Main Result}
\label{sec:MainRes}
An important problem in wireless communication is to characterize the fraction of the network (approximate) capacity that can be achieved by using only a subset of the relays in the network, while the remaining relays remain silent. In this paper, we address this question for a single relay case in a Gaussian half-duplex diamond $n$-relay network. 
More precisely, we characterize fundamental guarantees on the  approximate capacity of the {\em best} single relay sub-network, as a fraction of the approximate capacity of the entire network $\mathcal{N}$. 

We note that the approximate capacity $\mathsf{C}_n({\mathcal{N}})$ in~\eqref{maxmincut} is a function of the network $\mathcal{N}$ only through the point-to-point link capacities $(\ell_i,r_i), i \in [1:n]$. Thus, with a slight abuse of notation, in what follows we let $\mathcal{N} = \{(\ell_i,r_i), i \in [1:n]\}$. We also use $\mathcal{N}_i=\{(\ell_i,r_i)\}$ to denote a half-duplex network consisting of the source, relay $R_i$ and destination. 
By solving the 
problem in~\eqref{maxmincut} for the single relay $R_i, i \in [1:n]$, we obtain that the approximate capacity of $\mathcal{N}_i$ is given~by
\begin{align*}
\mathsf{C}_{1} (\mathcal{N}_i) = \frac{\ell_i r_i}{\ell_i + r_i}.
\end{align*}
We also define the \emph{best single relay approximate} capacity of the network as the maximum approximate capacity among the single relay sub-networks, that is, 
\begin{align*}
\mathsf{C}_{1}(\mathcal{N}) = \max_{i \in [1:n]} \mathsf{C}_{1} (\mathcal{N}_i).
\end{align*}
Our goal is to find universal bounds on $\mathsf{C}_{1}(\mathcal{N})/{\mathsf{C}}_{n}(\mathcal{N})$, which holds independent of the actual value of the channel coefficients. In particular, our main result is given in the next theorem, whose proof is provided in Sections~\ref{sec:ProofMainTh}~and~\ref{sec:TightTh}.
\begin{theorem} \label{thm1}
For any Gaussian half-duplex diamond network $\mathcal{N}$ with $n$ relays and approximate capacity ${\mathsf{C}}_{n}(\mathcal{N})$, the best relay has an approximate capacity $\mathsf{C}_{1}(\mathcal{N})$ such that
\begin{equation}\label{ratio} 
 \frac{\mathsf{C}_{1}(\mathcal{N})}{{\mathsf{C}}_{n}(\mathcal{N})} \geq  \frac{1}{2+2\cos \left (\frac{2\pi}{n+2} \right )}.
\end{equation} 
Moreover, the bound in~\eqref{ratio} is tight, i.e., for any positive integer $n$, there exist Gaussian half-duplex diamond $n$-relay networks for which the best relay has an approximate capacity that satisfies the bound in~\eqref{ratio} with equality.
\end{theorem}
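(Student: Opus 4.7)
The plan is to prove Theorem~\ref{thm1} in two parts: first, establish the inequality~\eqref{ratio} for every $\mathcal{N}$, and then exhibit explicit tight networks for each parity of $n$. A useful reformulation is that $f = 1/(4\cos^2(\pi/(n+2)))$, and since $2\cos(\pi/(n+2))$ is the spectral radius of the adjacency matrix of the path graph on $n+1$ vertices, this strongly suggests that the extremal ratio is governed by a tridiagonal / Chebyshev structure; both the dual LP for the lower bound and the tight construction should reflect this.

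For the inequality, I would work from the dual of~\eqref{maxmincut}, namely $\mathsf{C}_n(\mathcal{N}) = \min_{\mu}\max_{\mathcal{S}}\sum_{\Omega}\mu_\Omega\bigl(\max_{i\in\mathcal{S}^c\cap\Omega^c}\ell_i + \max_{i\in\mathcal{S}\cap\Omega}r_i\bigr)$, where $\mu$ ranges over distributions over cuts. The first step is to sort the relays canonically (for instance so that $\ell_i$ is non-increasing and, after handling the duality between ``source-side'' and ``destination-side'' by symmetry, $r_i$ is non-decreasing) and restrict attention to the $n+1$ \emph{nested} cuts $\Omega_k=\{1,\ldots,k\}$, $k=0,\ldots,n$; this restriction is natural given the result of~\cite{CardoneTIT2016} that at most $n+1$ states and cuts are active at optimality. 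The second step is to propose Chebyshev-type weights $\mu_k$ (e.g.\ proportional to $\sin((k+1)\pi/(n+2))$, up to renormalization) and to verify that for every $\mathcal{S}\subseteq[1:n]$ the resulting expression is bounded by $(1/f)\cdot\max_i \ell_i r_i/(\ell_i+r_i) = \mathsf{C}_1(\mathcal{N})/f$.

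For tightness, the plan is to construct, for each parity of $n$, a ``balanced'' network in which all single-relay capacities $\ell_i r_i/(\ell_i+r_i)$ take a common value $M$, while solving~\eqref{maxmincut} directly yields $\mathsf{C}_n(\mathcal{N})=M/f$. The natural ansatz is to pick $\ell_i, r_i$ as products of sines of multiples of $\pi/(n+2)$, chosen so that the LP optimum is certified by a pair of primal schedule $\lambda_{\mathcal{S}_k}$ and dual cut weights $\mu_{\Omega_k}$ satisfying complementary slackness on the nested family and producing the Chebyshev eigenvalue $4\cos^2(\pi/(n+2))$. The parity split is forced by the fact that the central index of the sine ansatz behaves differently depending on whether $n+2$ is even or odd, so one construction is needed for each parity.

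The main obstacle is the combinatorial verification in the second step of the LP argument: even after the canonical sort, the pointwise maxima $\max_{i\in\mathcal{S}^c\cap\Omega^c}\ell_i$ and $\max_{i\in\mathcal{S}\cap\Omega}r_i$ mean that the inequality $\sum_\Omega\mu_\Omega c_{\Omega,\mathcal{S}}\leq\mathsf{C}_1(\mathcal{N})/f$ must be checked not only for nested $\mathcal{S}$ but for \emph{all} $2^n$ subsets. Handling this uniformly requires a careful case analysis according to which relays realize the active maximisers in each nested cut, and this is where most of the technical work of Section~\ref{sec:ProofMainTh} is expected to go. The tightness proof in Section~\ref{sec:TightTh} faces the dual difficulty of showing that no non-nested schedule can outperform the canonical one on the constructed networks, which again reduces to a delicate cut-by-cut bookkeeping exercise.
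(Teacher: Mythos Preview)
Your high-level plan (nested cuts, LP duality, Chebyshev/sinusoidal structure, parity-split tight constructions) matches the paper's, but two technical ingredients that the paper relies on are absent from your proposal, and without them the ``main obstacle'' you flag does not dissolve.

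First, the paper does not attempt to verify a fixed dual vector $\mu$ against all $2^n$ states for an arbitrary network. Instead it begins (Lemma~\ref{lemma1}) by \emph{normalizing} so that every single-relay capacity equals $1$: one can scale each relay's $(\ell_i,r_i)$ up until $\ell_ir_i/(\ell_i+r_i)$ matches the best relay, which only increases $\mathsf{C}_n$ and leaves $\mathsf{C}_1$ unchanged. This is what legitimizes the simultaneous sort $\ell_1\le\cdots\le\ell_n$, $r_1\ge\cdots\ge r_n$ (since then $r_i=\ell_i/(\ell_i-1)$), and reduces the problem to a one-parameter family $z_i=\ell_i-1$. Your ``sort $\ell$ one way and $r$ the other by symmetry'' step is not valid for a generic network without this normalization.

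Second, and more importantly, the paper sidesteps the $2^n$-subset verification entirely. For each nested cut $\Omega_t$ it \emph{upper bounds} the cut value by replacing the schedule $\bm{\lambda}$ with the $n$ marginals $\alpha_t=\Pr[t\notin\mathcal{S}]$ (see~\eqref{cutset}); the monotone sort makes $\max_{i\in\mathcal{S}^c\cap\Omega_t^c}\ell_i\le \ell_t$ when $t\notin\mathcal{S}$, etc. This collapses the maximization over $2^n$ states to one over $\alpha\in[0,1]^n$, and the resulting objective is \emph{convex in each $z_t$}, forcing the optimal $z_t^\star$'s to coalesce into at most $\lfloor(n+2)/2\rfloor$ groups. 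The Chebyshev recurrence $b_{i+1}-\sigma b_i+b_{i-1}=0$ and the value $\sigma=2\cos(2\pi/(n+2))$ then fall out of equating the reduced cut constraints (Proposition~\ref{lem:Smn}). So the sinusoidal structure is \emph{derived}, not guessed and verified.

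For tightness you anticipate ``showing that no non-nested schedule can outperform the canonical one,'' but the paper avoids this too: it simply observes $\mathsf{C}_n(\mathcal{N})\le \mathsf{C}_n^{\mathrm{FD}}(\mathcal{N})\le r_1$ (or $\ell_n$), and the construction is arranged so that this full-duplex bound already equals $2+2\cos(2\pi/(n+2))$. A simple two-state schedule then matches it from below. No cut-by-cut bookkeeping is needed.
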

Fig.~\ref{fig:Ratio} provides a graphical representation of the bound in~\eqref{ratio} as a function of the number of relays $n$.
\begin{figure}[t]
\centering 
\subfloat[\label{fig:Ratio}]{
\includegraphics[width=0.47\columnwidth]{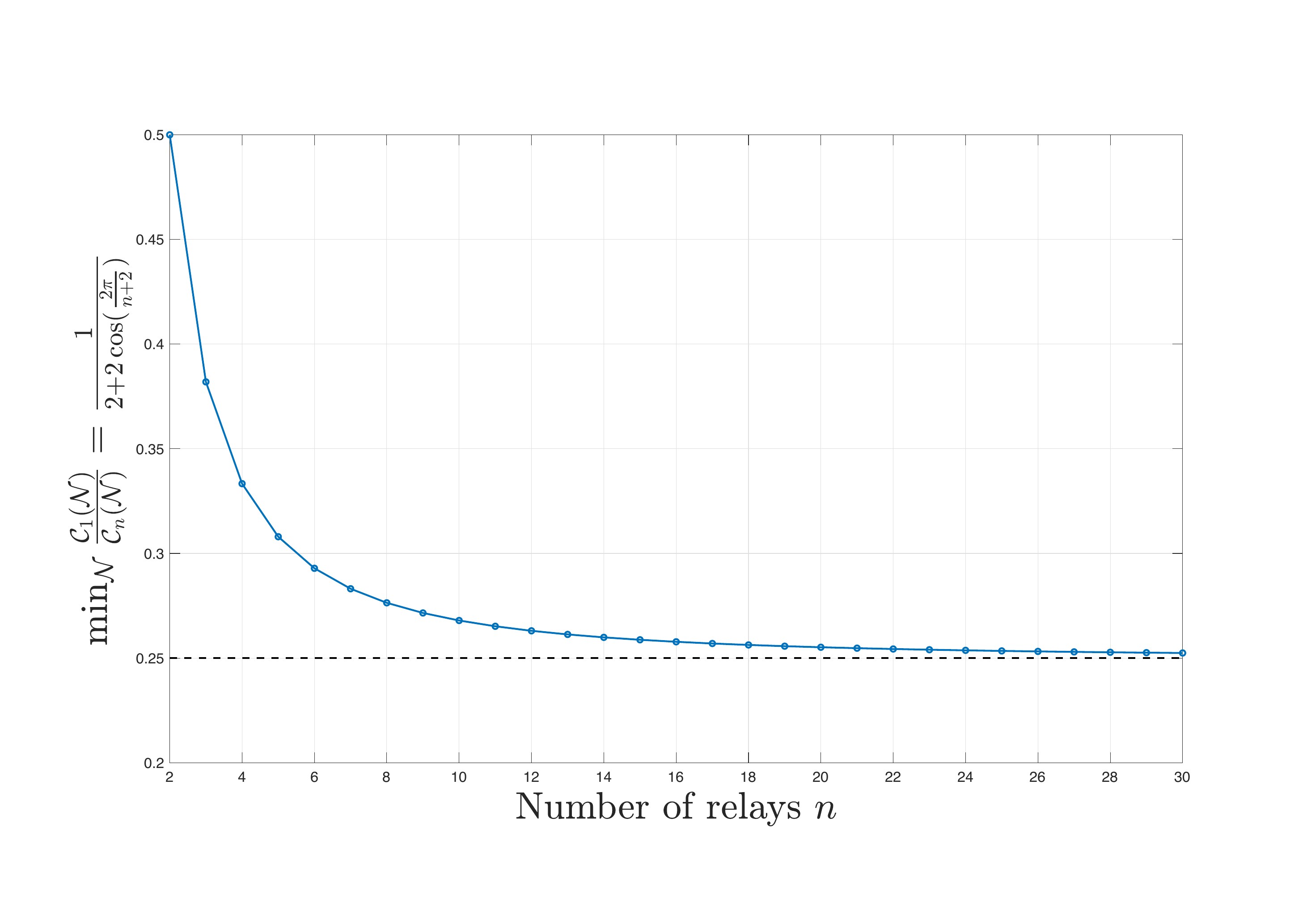}
}
\hfil
\subfloat[\label{fig:Rayleigh}]{
\includegraphics[width=0.47\columnwidth]{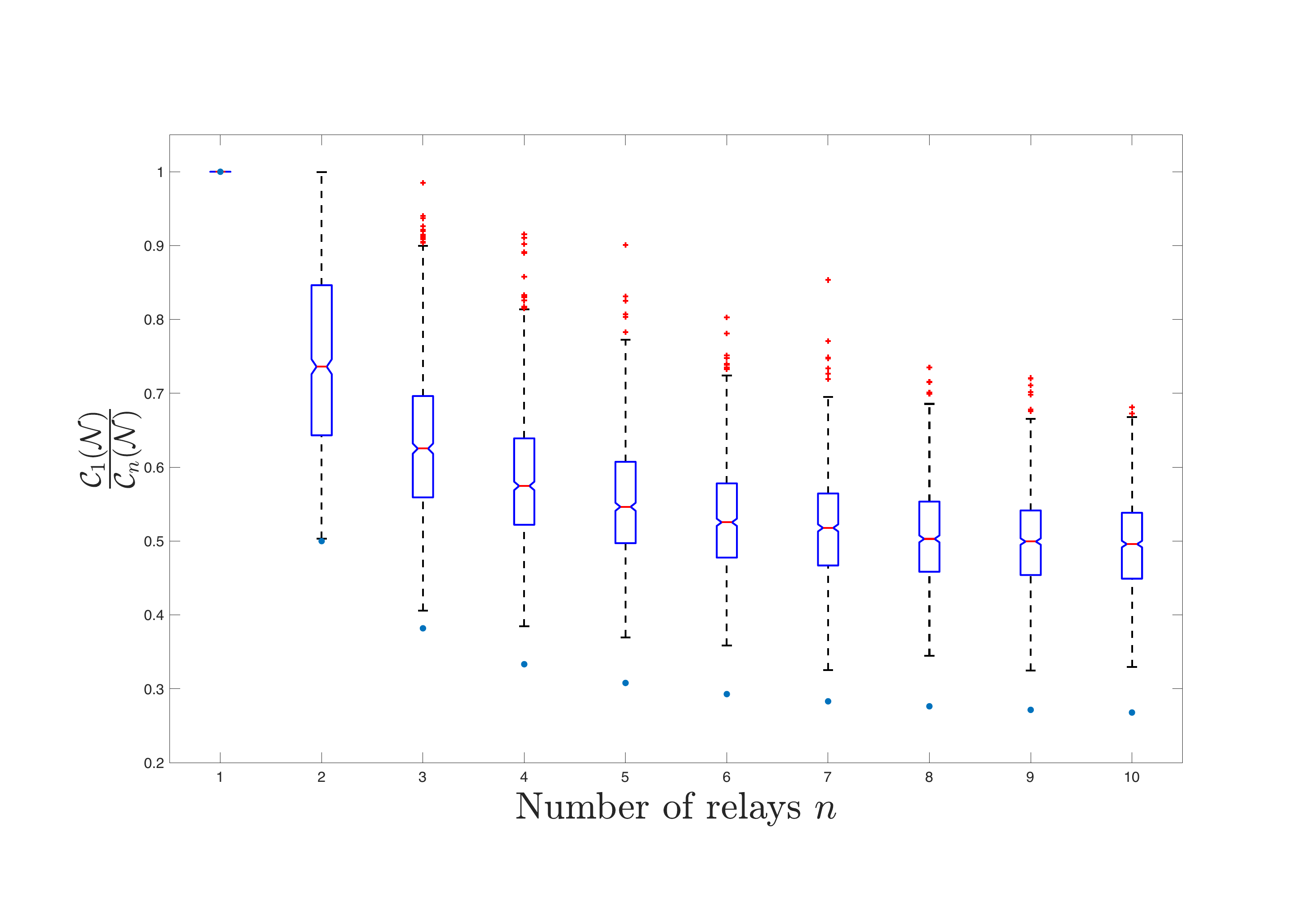}
}
\caption{Ratio  $\mathsf{C}_{1}(\mathcal{N})/{\mathsf{C}}_{n}(\mathcal{N})$ as a function of $n$: (a) the analytical bound on the ratio in~\eqref{ratio}, and (b)  the numerical ratio from $1000$ networks with random link coefficients generated from Rayleigh distribution with parameter $\sigma=1$.}
\vspace{-5mm}
\end{figure}
Before concluding this section, we state a few remarks.
\begin{rem}
The bound in~\eqref{ratio} for $n=2$ and $n \rightarrow \infty$ reduces to
\begin{align*}
\frac{\mathsf{C}_{1}(\mathcal{N})}{\mathsf{C}_{n}(\mathcal{N})} \geq \left \{
\begin{array}{ll}
1/2 & n =2, \\
1/4 & n \rightarrow \infty,
\end{array}
\right .
\end{align*}
which subsumes the result of~\cite{EzzeldinIT2019}. However,  the bound in~\eqref{ratio}  provides a tight and non-asymptotic guarantee for all values of $n$, which  was left as an open problem in~\cite{EzzeldinIT2019}.
\end{rem}
\begin{rem}
The bound in~\eqref{ratio} has a pretty surprising behavior, which depends on the cosine of a function of the number of relays $n$. This is also fundamentally different from the result in full-duplex~\cite{NazarogluTIT2014}, where it was shown that the best relay has always a capacity that is at least $1/2$ of the approximate capacity of the entire network, independent of $n$.
\end{rem}

\begin{rem}
Fig.~\ref{fig:Rayleigh} shows some of the statistics of the ratio $\mathsf{C}_{1}(\mathcal{N})/{\mathsf{C}}_{n}(\mathcal{N})$ for networks with randomly generated 
$(\ell_i,r_i), i \in [1:n],$ where $(|h_{si}|, |h_{id}|)$ follow the Rayleigh distribution with scale parameter $\sigma=1$. 
For each $n \in [1:10]$, 1000 sample networks were generated. The ratio $\mathsf{C}_{1}(\mathcal{N})/\mathsf{C}_{n}(\mathcal{N})$ for these 1000 networks is plotted as a  box-plot, wherein on each box: 
(i) the central mark indicates the median;
(ii) the top and bottom edges of the box indicate the $75^{th}$ and $25^{th}$ percentile, respectively.
Any point which is at a distance of more than 1.5 times the length of the box from the top or bottom edge is an outlier (represented by a plus sign). Whiskers are drawn from the edges of the box to the furthest observations, which are not outliers. 
The circular dots indicate the worst case ratio in~\eqref{ratio}. 
From Fig.~\ref{fig:Rayleigh}, we observe that networks with Rayleigh faded channels have a larger ratio 
on average, compared to worst case networks. 
As an example, consider $n=3$: we have $C_1(\mathcal{N})/C_3(\mathcal{N}) \geq 0.66$ for $50\%$ of the sample networks and $C_1(\mathcal{N})/C_3(\mathcal{N}) \geq 0.72$ for $25 \%$ of the sample networks, while the worst case ratio is only $0.382$.
\end{rem}

\section{Proof of the Bound in Theorem~\ref{thm1}}
\label{sec:ProofMainTh}
In this section, we formally prove that the bound given in  Theorem~\ref{thm1} is satisfied for any Gaussian half-duplex diamond network. 
Towards this end, we first provide a few properties on the approximate capacity and  the general theory of optimization  in Section~\ref{sec:Prop}.
Then, in Section~\ref{sec:Thm1} we use these properties to prove the fraction guarantee in~\eqref{ratio}.

\subsection{Properties on the Approximate Capacity}
\label{sec:Prop}
Here, we derive some properties on the approximate capacity of a Gaussian half-duplex diamond $n$-relay network that we will leverage to prove the fractional guarantee in~\eqref{ratio}.
In particular, we start by stating the following three properties, which directly follow by inspection of the optimization problem in~\eqref{maxmincut}. We have,
\begin{enumerate}[label=(P\arabic*)]
    \item  \label{prop:add} The approximate capacity ${\mathsf{C}}_{n}(\mathcal{N})$ is a non-decreasing function of each point-to-point link capacity; that is, ${\mathsf{C}}_{n}(\mathcal{N} + \bm{\epsilon}) \geq {\mathsf{C}}_{n}(\mathcal{N})$, for any $2n$-vector 
$\bm{\epsilon}$ of non-negative entries. 
    
\item \label{prop:mult} The ratio $\mathsf{C}_{1}(\mathcal{N})/{\mathsf{C}}_{n}(\mathcal{N})$ is invariant to scaling all the point-to-point link capacities by a constant factor, that is, $\mathsf{C}_{1}(\mathcal{N})/{\mathsf{C}}_{n}(\mathcal{N}) = \mathsf{C}_{1}(\alpha \mathcal{N})/{\mathsf{C}}_{n}( \alpha \mathcal{N})$.
    \item \label{prop:label} The ratio $\mathsf{C}_{1}(\mathcal{N})/{\mathsf{C}}_{n}(\mathcal{N})$ is invariant to a relabelling of the relay nodes.
\end{enumerate}
%
%
%
%
%
%
%
Using the three properties above, we have the following lemma.
\begin{lemma} \label{lemma1}
Let $\mathcal{N}^\star$ be the collection of half-duplex diamond $n$-relay networks for which the ratio $\mathsf{C}_{1}(\cdot)/{\mathsf{C}}_{n}(\cdot)$ is minimum.
Then, there exists 
$\mathcal{N} \in \mathcal{N}^\star$ that satisfies the three following properties:
\begin{subequations}
\begin{align}
& 1 \leq \ell_1 \leq \ell_2 \leq ... \leq \ell_{n-1} \leq \ell_n \leq \infty,  \label{eq:i}
\\
&  \infty \geq r_n \geq r_{n-1} \geq ... \geq r_{2} \geq r_1 \geq 1, \label{eq:ii}
\\ 
&
 \frac{\ell_i r_i}{\ell_i + r_i}=1, \quad     \forall i \in [1:n]. \label{eq:iii}
\end{align}
\end{subequations}
%
%
%
%
\end{lemma}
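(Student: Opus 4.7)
The plan is to start from an arbitrary minimizer $\mathcal{N} \in \mathcal{N}^\star$ and massage it, by applying the three invariances (P1)--(P3) in sequence, into a network satisfying the three structural conditions of the lemma.

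First, I would use (P1) to equalize the single-relay capacities. Let $c = \mathsf{C}_{1}(\mathcal{N}) = \max_i \tfrac{\ell_i r_i}{\ell_i + r_i}$, and consider any index $i$ with $\tfrac{\ell_i r_i}{\ell_i + r_i} < c$. I would replace $(\ell_i, r_i)$ by a pointwise-larger pair $(\ell_i', r_i')$ lying on the curve $\tfrac{\ell r}{\ell + r} = c$; such a pair exists because the curve is unbounded in each coordinate. By (P1) the approximate capacity $\mathsf{C}_{n}(\mathcal{N})$ can only weakly increase, while $\mathsf{C}_{1}(\mathcal{N})$ stays equal to $c$ since the modified relay now merely matches, rather than exceeds, the maximum. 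Hence the ratio $\mathsf{C}_{1}/\mathsf{C}_{n}$ can only weakly decrease, and because $\mathcal{N}$ already attained the minimum ratio, the modified network remains in $\mathcal{N}^\star$. Iterating this over all $i$, I arrive at a minimizer in which every single-relay capacity equals $c$.

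Next, by (P2), I rescale all link capacities by $1/c$, which keeps the ratio unchanged and produces a minimizer with $\tfrac{\ell_i r_i}{\ell_i + r_i} = 1$ for every $i$, which is precisely~\eqref{eq:iii}. Rewriting this as $\tfrac{1}{\ell_i} + \tfrac{1}{r_i} = 1$ immediately forces $\ell_i, r_i \geq 1$, supplying the lower bounds in~\eqref{eq:i} and~\eqref{eq:ii}. Finally, by (P3) I am free to permute the relay labels; sorting so that $\ell_1 \leq \ell_2 \leq \cdots \leq \ell_n$ gives~\eqref{eq:i}, and the harmonic-mean constraint~\eqref{eq:iii}, equivalently $r_i = \ell_i/(\ell_i - 1)$, pins down the monotone ordering of the $r_i$'s required by~\eqref{eq:ii}.

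The main obstacle is making step one airtight: one must verify that the equalization move is always feasible (which boils down to the geometry of the hyperbola $\tfrac{\ell r}{\ell+r}=c$ in the first quadrant) and actually keeps the network inside $\mathcal{N}^\star$ rather than outside it, which could happen if the minimum were not attained. Implicit in the argument is the existence of a minimizer in the first place; after the normalization of the second step, however, the admissible parameter set for networks with $\mathsf{C}_{1}(\mathcal{N}) = 1$ becomes a closed subset of $[1,\infty]^{2n}$ on which $\mathsf{C}_{n}$ is continuous, so a standard compactness argument (or a limiting procedure over near-minimizers) secures existence and legitimizes the entire reduction.
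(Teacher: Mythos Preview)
Your proposal is correct and follows essentially the same three-step reduction as the paper: equalize all single-relay capacities to the maximum via (P1), normalize to $1$ via (P2), then sort via (P3) and read off the $r_i$ ordering from the harmonic-mean constraint. The only cosmetic difference is in the equalization step: where you assert existence of some pointwise-larger pair on the level curve $\ell r/(\ell+r)=c$, the paper makes the explicit choice $(\ell_i',r_i') = \tfrac{c}{\mathsf{C}_1(\mathcal{N}_i)}(\ell_i,r_i)$, i.e., a common multiplicative scaling of both coordinates, which lands exactly on the curve and is manifestly $\geq (\ell_i,r_i)$; your added remarks on existence of a minimizer go slightly beyond what the paper addresses.
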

\begin{proof}
We first prove that there exists $\mathcal{N} \in \mathcal{N}^\star$ for which all the $n$ single relay approximate capacities are identical. 
Consider 
$\mathcal{N} \in \mathcal{N}^\star$ with approximate capacity ${\mathsf{C}}_{n} (\mathcal{N}) $ and ${\mathsf{C}}_{1}(\cN) = \sfC_1(\mathcal{N}_k)$, i.e., relay $R_k$ has maximum single-relay approximate capacity among all relays.   Thus, 
\begin{align}
\sfC_1(\mathcal{N}_k) = \frac{\ell_k r_k}{\ell_k+r_k} \geq \frac{\ell_j r_j}{\ell_j+r_j} = \sfC_1(\mathcal{N}_j), \quad    \forall j \in [1:n].
\end{align} 
Now we can create a new network $\cN' = \{(\ell'_i, r'_i), i\in [1:n]\}$, where
\begin{align*}
\ell'_i = \frac{\sfC_1(\cN_k)}{\sfC_1(\cN_i)} \ell_i, \qquad r'_i = \frac{\sfC_1(\cN_k)}{\sfC_1(\cN_i)}r_i, \qquad i\in [1:n].
\end{align*}
Note that since $\frac{\sfC_1(\cN_k)}{\sfC_1(\cN_i)}\geq 1$, we have $\ell'_i \geq \ell_i$ and $r'_i\geq r_i$. Hence, Property~\ref{prop:add} implies that
\begin{align}
\sfC_n(\cN') \geq \sfC_n(\cN).
\label{eq:Cn-increased}
\end{align}
Moreover, for every $i\in [1:n]$, we have 
\begin{align}
&\sfC_1(\cN'_i) = \frac{\ell'_i r'_i}{\ell'_i+r'_i} = \frac{\left(\frac{\sfC_1(\cN_k)}{\sfC_1(\cN_i)}\right)^2 \ell_i r_i}{\frac{\sfC_1(\cN_k)}{\sfC_1(\cN_i)} (\ell_i+r_i)} = \frac{\sfC_1(\cN_k)}{\sfC_1(\cN_i)} \sfC_1(\cN_i) = \sfC_1(\cN_k), \nonumber
\\
&\Longrightarrow \sfC_1(\cN') = \max_{i\in [1:n]} \sfC_1(\cN'_i) = \sfC_1(\cN_k).
\label{eq:C1-constant}
\end{align}
This together with  \eqref{eq:Cn-increased} yield to $\frac{\sfC_1(\cN')}{\sfC_n(\cN')} \leq \frac{\sfC_1(\cN)}{\sfC_n(\cN)}$,
which implies $\cN'\in \cN^\star$. Now, we can consider $\cN'' = \frac{1}{\sfC_1(\cN_k)} \cN'$. Property~\ref{prop:mult} implies that $\frac{\sfC_1(\cN'')}{\sfC_n(\cN'')} = \frac{\sfC_1(\cN')}{\sfC_n(\cN')} \leq \frac{\sfC_1(\cN)}{\sfC_n(\cN)}$, and hence $\cN'' \in \cN^\star$. Moreover, it is easy to show that in $\cN''$ we have $\sfC_1(\cN''_i) = 1$ for every $i \in [1:n]$. This proves ~\eqref{eq:iii} for the network $\cN''$.
Next, we can relabel the relay nodes such that they will be sorted in ascending order according to their left-hand link capacities $\ell''_i$, and hence satisfy~\eqref{eq:i}. Note that Property~\ref{prop:label} guarantees that the ratio $\mathsf{C}_{1}(\mathcal{N''})/{\mathsf{C}}_{n}(\mathcal{N''})$ is invariant.  Finally, combining~\eqref{eq:i} and~\eqref{eq:iii} readily proves~\eqref{eq:ii}, and concludes the proof of Lemma~\ref{lemma1}.
\end{proof}
Next, we present a lemma, that we will use in the proof of Theorem~\ref{thm1}.
\begin{lemma} \label{Lemma2}
Let $\mathcal{A}$ be any set, and $\{f_i(\cdot), i \in [1:t]\}$ be any set of functions. Then, the two optimization problems given below have identical solutions:
\begin{equation} \label{O1}
\begin{split}
\max_{\mathbf{x} \in \mathcal{A} } & \ y \\
\text{s.t. } & \ y \leq f_i(\mathbf{x}), \quad i \in [1:t],
\end{split}
    \end{equation}
and
\begin{equation} \label{O2}
\begin{split}
\min_{\bm{\mu}} \max_{\mathbf{x} \in \mathcal{A} } & \ \sum_{i=1}^{t} \mu_i f_i(\mathbf{x}) \\
\text{s.t. } & \ \mu_i \geq 0, \quad i \in [1:t], \\
& \ \sum_{i=1}^{t}\mu_i=1,
\end{split}
    \end{equation}
\end{lemma}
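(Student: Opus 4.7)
The plan is to reinterpret both problems as opposite orders of a minimax over the same bilinear objective, and then invoke strong duality to close the gap.

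First, I would observe that Problem~\eqref{O1} is exactly $\max_{\mathbf{x} \in \mathcal{A}} \min_{i \in [1:t]} f_i(\mathbf{x})$, because at any feasible optimum the slack variable $y$ must equal $\min_i f_i(\mathbf{x})$. Next, using the elementary fact that a linear function on the probability simplex $\Delta_t := \{\bm{\mu} \in \mathbb{R}^t : \bm{\mu} \geq 0,\ \sum_{i=1}^t \mu_i = 1\}$ attains its minimum at a vertex, the discrete minimum can be rewritten as a continuous one:
\[
\min_{i \in [1:t]} f_i(\mathbf{x}) \;=\; \min_{\bm{\mu} \in \Delta_t} \sum_{i=1}^{t} \mu_i f_i(\mathbf{x}).
\]
Writing $L(\mathbf{x}, \bm{\mu}) := \sum_{i=1}^{t} \mu_i f_i(\mathbf{x})$, Problem~\eqref{O1} becomes $\max_{\mathbf{x} \in \mathcal{A}} \min_{\bm{\mu} \in \Delta_t} L(\mathbf{x}, \bm{\mu})$, while Problem~\eqref{O2} is $\min_{\bm{\mu} \in \Delta_t} \max_{\mathbf{x} \in \mathcal{A}} L(\mathbf{x}, \bm{\mu})$. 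The lemma thus reduces to proving a minimax equality for $L$.

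The easy direction, $\eqref{O1} \leq \eqref{O2}$, is immediate weak duality: for every $\mathbf{x} \in \mathcal{A}$ and every $\bm{\mu} \in \Delta_t$ one has $\min_i f_i(\mathbf{x}) \leq \sum_i \mu_i f_i(\mathbf{x}) \leq \max_{\mathbf{x}' \in \mathcal{A}} \sum_i \mu_i f_i(\mathbf{x}')$; taking $\max_{\mathbf{x}}$ on the left and then $\min_{\bm{\mu}}$ on the right yields the inequality with no further assumptions.

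The reverse inequality is the main obstacle and requires strong duality; with completely arbitrary $\mathcal{A}$ and $f_i$ the equality can fail, so I would read the lemma in the finite-dimensional LP setting that is in fact the only one used in the paper. Concretely, $L$ is linear, hence concave and upper semi-continuous, in $\bm{\mu}$ on the convex compact simplex $\Delta_t$, so Sion's minimax theorem applies whenever $L(\cdot, \bm{\mu})$ is quasi-concave and upper semi-continuous in $\mathbf{x}$ over a convex $\mathcal{A}$. In the intended application to~\eqref{maxmincut}, $\mathcal{A}$ is itself the simplex of schedules $\bm{\lambda}$ and each $f_i$ is affine in $\bm{\lambda}$, so $L$ is bilinear on a product of compact convex simplices and the equality follows from von Neumann's minimax theorem or, equivalently, from strong LP duality applied directly to~\eqref{O1}. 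Writing out that finite-dimensional LP strong-duality step is the core of the argument I would present.
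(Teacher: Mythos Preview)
Your route is quite different from the paper's. The paper gives a short elementary argument: for the direction $(\ref{O2})\leq(\ref{O1})$ it takes an inner-maximizer $\mathbf{x}'$ of \eqref{O2} at the optimal $\bm{\mu}'$, sets $k=\arg\min_i f_i(\mathbf{x}')$, and then \emph{asserts} that the optimal $\bm{\mu}'$ must be the vertex $e_k$, so that the value of \eqref{O2} equals $f_k(\mathbf{x}')=\min_i f_i(\mathbf{x}')$, which is feasible in \eqref{O1}. Your remark that equality can fail for arbitrary $\mathcal{A}$ and $f_i$ is exactly right---a two-point set $\mathcal{A}=\{a,b\}$ with $f_1(a)=f_2(b)=0$, $f_1(b)=f_2(a)=1$ already gives $\max\min=0<\tfrac12=\min\max$---and this same example refutes the paper's vertex claim. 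So your minimax/duality framing is the honest way to state and prove what is actually true, and it buys you a correct proof where the paper's is only heuristic.

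One caveat on your last paragraph: in the paper the lemma is not applied to \eqref{maxmincut} over the schedule simplex $\bm{\lambda}$, but to the problems \eqref{Opti} and \eqref{OPT3}, where the maximization is over $(\mathbf{z},\bm{\alpha})$ and over $\bm{\beta}$, the constraint functions contain $1/z$-terms and bilinear products, and the domain is not compact. A direct appeal to von~Neumann or Sion therefore does not go through as stated; you would need either a problem-specific saddle-point argument, or the observation that for the first invocation only the weak-duality inequality $\mathrm{OPT}_1\le\mathrm{OPT}_2$ is needed in the chain \eqref{eq:SummBounds}, while the second invocation can be handled by analyzing $\mathrm{OPT}_3$ directly.
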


\begin{proof}
We prove Lemma~\ref{Lemma2} by showing that an optimal solution for~\eqref{O1} is a feasible solution for~\eqref{O2}, and an optimal solution for~\eqref{O2} is a feasible solution for~\eqref{O1}.

Let $\mathbf{x^\star}$ be an optimal solution for~\eqref{O1} and assume $j \in [1:t]$ be such that $f_j(\mathbf{x}^\star)\leq f_i(\mathbf{x}^\star), \forall i=[1:t]$. Then, the optimal value of~\eqref{O1} is equal to $f_j(\mathbf{x}^\star)$. Now, letting $\mu_j=1, \mu_i=0, \forall i\in [1:t], i\ne j,$ and $\mathbf{x}=\mathbf{x^\star}$ in~\eqref{O2}, we see that $f_j(\mathbf{x}^\star)$ is a feasible solution for~\eqref{O2}. 
Similarly, let $\mathbf{x}'$ be an optimal solution for~\eqref{O2} and assume $k \in [1:t]$ such that $f_k(\mathbf{x}')\leq f_i(\mathbf{x}'), \forall i=[1:t]$. 
Then, it is easy to see that the optimal $\bm{\mu}'$ in~\eqref{O2} is given by 
$\mu_k'=1, \mu_i'=0, \forall i\in [1:t], i\ne k$; moreover, the optimal value for~\eqref{O2} is equal to $f_k(\mathbf{x}')$. 
Since $\mathbf{x}' \in \mathcal{A}$ and $f_k(\mathbf{x}')\leq f_i(\mathbf{x}'), \forall i=[1:t]$, then $f_k(\mathbf{x}')$ is also a feasible solution for~\eqref{O1}.
This concludes the proof of Lemma~\ref{Lemma2}.
\end{proof}

\subsection{Proof of the Fraction Guarantee in~\eqref{ratio}}
\label{sec:Thm1}
We now use the results derived in Lemma~\ref{lemma1} and Lemma~\ref{Lemma2} to prove the ratio guarantee in~\eqref{ratio}.
We start by noting that the result in Lemma~\ref{lemma1} implies that there always exists a network $\mathcal{N}$ such that
$\mathsf{C}(\mathcal{N}_i) = 1, \forall i \in [1:n]$, and hence also $\mathsf{C}_{1} (\mathcal{N}) = 1$. Thus, proving~\eqref{ratio} reduces to proving that, for any Gaussian half-duplex diamond $n$-relay network $\mathcal{N}$ with unitary single relay approximate capacities, we always have
$ {\mathsf{C}}_{n}(\mathcal{N}) \leq  \sigma_n +2$, where $\sigma_n=2\cos(\frac{2\pi}{n+2})$,
or equivalently, 
\begin{align}
\max_{\cN: \sfC(\cN_i)=1, \forall i\in [1:n]} \mathsf{C}_{n}(\cN) \leq  \sigma_n +2.
\label{eq:opt:-1}
\end{align}
In order to rephrase the constraints in the optimization problem in \eqref{eq:opt:-1}, let us define 
\begin{align}
z_i \triangleq \ell_i -1, \qquad i\in [1:n].
\label{eq:def:z}
\end{align}
Recall that $\sfC(\cN_i) = \frac{\ell_i r_i}{\ell_i+r_i}=1$. This implies that $r_i=\frac{1}{z_i}+1$. Therefore, the class of networks of interest can be parameterized by $\bz = [z_1, z_2, \dots, z_n]$. Note that the condition in~\eqref{eq:i} implies that $0 \leq z_1 \leq z_2 \leq ... \leq z_n \leq \infty$. 
Rewriting our optimization problem in~\eqref{eq:opt:-1} in terms of $z_i$'s, and using the definition of the approximate capacity in \eqref{maxmincut}, we arrive at 
\begin{equation} 
\begin{split}
   \mathrm{OPT}_0 = \max_{\bz} \max_{\bm{\lambda}} &\  \Gamma   \\
    \text{s.t. } & \ \Gamma \leq \sum_{\mathcal{S}\subseteq{[1:n]}} \lambda_{\mathcal{S}}\bigg( \max_{i \in \mathcal{S}^c \cap \Omega^c} \ell_i +\max_{i \in \mathcal{S} \cap \Omega} r_i
    \bigg),  \quad \forall   \Omega \subseteq [1:n],\\
     & \ \sum_{\mathcal{S}\subseteq{[1:n]}} \lambda_{\mathcal{S}}=1, \; \lambda_{\mathcal{S}} \geq 0, \quad \forall  \mathcal{S} \subseteq [1:n], \\
     & \ell_i= 1+z_i, \quad r_i = 1+\frac{1}{z_i}, \quad i\in [1:n],\\
     & \ 0\leq z_1\leq z_2 \leq \cdots \leq z_n \leq \infty.
\end{split}
\label{eq:opt:0}
\end{equation}
 
\noindent\textbf{Reducing the Number of Constraints.} 
Note that the optimisation problem in \eqref{eq:opt:0} has one constraint each possible partition of the relays $\Omega\subseteq [1:n]$. 
Instead of considering all relay partitions, we can focus on a small class of them parameterized as $\Omega_t, \forall t \in [0:n]$, where
\begin{equation}
\Omega_t = [t+1:n], \qquad \mathrm{and} \qquad \Omega_t^c =[1:t].
\end{equation}
That is, $\Omega_t$ partitions all the relays into two groups, namely $\{t+1,\dots, n,n\}$ on the `source side', and $\{1,2,\dots, t\}$ on the `destination side'.
With this, the right-hand-side of the cut constraint corresponding to $\Omega_t$ in~\eqref{eq:opt:0} 
can be simplified as 
\begin{align} \label{cutset}
\sum_{\mathcal{S}\subseteq{[1:n]}} \lambda_{\mathcal{S}}&\bigg( \max_{i \in \mathcal{S}^c \cap \Omega_t^c} \ell_i +\max_{i \in \mathcal{S} \cap \Omega_t} r_i\bigg) \nonumber \\ 
 & = 
\sum_{\mathcal{S}:t \in \mathcal{S}} \lambda_{\mathcal{S}} \max_{i \in \mathcal{S}^c \cap \Omega_t^c} \ell_i
+ \sum_{\mathcal{S}:t\notin \mathcal{S}} \lambda_{\mathcal{S}} \max_{i \in \mathcal{S}^c \cap \Omega_t^c} \ell_i 
+ \sum_{\mathcal{S}:t+1\in \mathcal{S}} \lambda_{\mathcal{S}} 
\max_{i \in \mathcal{S} \cap \Omega_t} r_i
+\sum_{S:t+1\notin \mathcal{S}} \lambda_{\mathcal{S}} \max_{i \in \mathcal{S} \cap \Omega_t} r_i \nonumber \\
 & \stackrel{{\rm{(a)}}}{\leq}  
  \sum_{\mathcal{S}:t \in \mathcal{S}} \lambda_{\mathcal{S}} \ell_{t-1}
 + \sum_{\mathcal{S}:t\notin \mathcal{S}} \lambda_{\mathcal{S}} \ell_t 
 +\sum_{\mathcal{S}:t+1\in \mathcal{S}} \lambda_{\mathcal{S}} r_{t+1}
 +\sum_{\mathcal{S}:t+1\notin \mathcal{S}} \lambda_{\mathcal{S}} r_{t+2} \nonumber \\
 &\stackrel{{\rm{(b)}}}{=} 
 (1-\alpha_t) \ell_{t-1} 
 +\alpha_t \ell_t 
 + (1-\alpha_{t+1})r_{t+1} 
 + \alpha_{t+1} r_{t+2} \nonumber \\
 &\stackrel{{\rm{(c)}}}{=} \bar{\alpha}_t (z_{t-1}+1) + \alpha_t (z_t+1) +  \bar{\alpha}_{t+1} \left (\frac{1}{z_{t+1}}+1 \right ) + \alpha_{t+1} \left (\frac{1}{z_{t+2}}+1 \right ) \nonumber \\
 &\triangleq g_t(\mathbf{z},\bm{\alpha}),
\end{align}
where the inequality in
$\rm{(a)}$ follows from the fact that, in the first summation $t\notin \mathcal{S}^c$ implies $\mathcal{S}^c \cap \Omega_t^c \subseteq  [1:t-1]$, which together with 
$\ell_1\leq \ell_2\leq \dots \leq \ell_n$ (according to~\eqref{eq:i}) yields $\max_{i \in \mathcal{S}^c \cap \Omega_t^c} \ell_i \leq \max_{i \in  [1:t-1]} \ell_i =\ell_{t-1}$. A similar argument holds for the other three summations in $\rm{(a)}$. 
The equality in $\rm{(b)}$ follows by letting $\alpha_t = \sum_{\mathcal{S}:t\notin \mathcal{S}} \lambda_{\mathcal{S}}$ and $\bar{\alpha}_t=(1-\alpha_t)=\sum_{\mathcal{S}:t\in \mathcal{S}} \lambda_{\mathcal{S}}$ for $t \in [1:n]$. Finally, in $\rm{(c)}$ we replaced $\ell_t$ by $1+z_t$ and $r_t$ by $1+\frac{1}{z_t}$ for $t\in [1:n]$, according  to the constraints in \eqref{eq:opt:0}. Note that, we  define $z_i=-1$ for $i \notin [1:n]$. For instance, for $t=0$, the function $g_0(\bz, \bm{\alpha})$ reduces to 
\begin{align*}
g_0(\bz, \bm{\alpha}) = \bar{\alpha}_{1} \left (\frac{1}{z_{1}}+1 \right ) + \alpha_{1} \left (\frac{1}{z_{2}}+1 \right ).
\end{align*}
Now, by ignoring all the cut constraints except those in  $\{\Omega_t: t\in [0:n]\}$, we 
obtain 
\begin{equation} \label{Opti}
\begin{split}
\mathrm{OPT}_1 = \max_{\mathbf{z},\bm{\alpha}} & \ \Gamma\\
 \text{s.t. }  & \  \Gamma \leq g_t(\mathbf{z},\bm{\alpha}), \quad \forall t \in [0:n], \\
 & \  \alpha_i \in [0,1],  \quad \forall i \in [0:n+1], \\
 & \ 0 \leq z_1 \leq z_2 \leq \ldots \leq z_n, \\
 & \ z_{-1}=z_0=z_{n+1}=z_{n+2}=-1.
\end{split}
\end{equation}
It is clear that $\mathrm{OPT}_0 \leq \mathrm{OPT}_1$, where $\mathrm{OPT}_0$ and $\mathrm{OPT}_1$ are the solutions of the optimization problems in~\eqref{eq:opt:0} and in~\eqref{Opti}, respectively. This follows since in~\eqref{Opti} we only considered a subset of the cut constraints that we have for solving~\eqref{eq:opt:0}, and hence we enlarged the set over which a feasible solution can be found. Moreover, variables $\alpha$'s can be uniquely determined from $\lambda$'s, but the opposite does not necessarily hold.


Now, using Lemma~\ref{Lemma2}, we can rewrite~\eqref{Opti}  as the following optimization problem
\begin{subequations}
\label{OPT2}
\begin{align} 
    \begin{split}
    \mathrm{OPT}_2 =    \min_{\bm{\mu}} \max_{\bm{z},\bm{\alpha}} & \ h(\bm{\mu},\bm{z},\bm{\alpha})\\
        \text{s.t. }  & \ \mu_t \geq 0 , \quad \forall  t \in [0:n], \\
         & \ \sum\nolimits_{t=0}^n \mu_t =1, \\
        & \ \alpha_i \in [0,1],  \quad \forall i \in [0:n+1], \\
        & \ 0 \leq z_1 \leq z_2 \leq \ldots \leq z_n, \\
        & \ z_{-1} = z_0 = z_{n+1}=z_{n+2}=-1,
    \end{split}
\end{align}
where
\begin{align}
\label{eq:h}
    h(\bm{\mu},\bm{z},\bm{\alpha})=\sum_{t=0}^n \mu_t g_t(\bm{z},\bm{\alpha}).
\end{align}
\end{subequations}
Therefore, by means of Lemma~\ref{Lemma2}, we have   $\mathrm{OPT}_2 = \mathrm{OPT}_1$. 

\noindent\textbf{Optimum $z_t^\star$'s Are Grouped.} 
Our next step towards solving the optimization problem of interest is to show that in the optimum solution of \eqref{OPT2}, $z_t^\star$ will appear in a repeated manner, i.e., except possibly for $z_1^\star$ and $z_n^\star$, each $z_t^\star$ equals either $z_{t-1}^\star$ or $z_{t+1}^\star$. 

We start by taking the derivative of the function $h(\bm{\mu},\bm{z},\bm{\alpha})$ defined in~\eqref{eq:h} with respect to each variable $z_t$, and we obtain
\begin{align*}
&\frac{\partial}{\partial z_t}h(\bm{\mu},\bm{z},\bm{\alpha}) = (\mu_t \alpha_t + \mu_{t-1} \bar{\alpha}_{t+1})-(\mu_{t-2} \alpha_{t-1}+\mu_{t-1} \bar{\alpha}_t)\frac{1}{z_t^2},
\\& \frac{\partial^2}{\partial z_t^2}h(\bm{\mu},\bm{z},\bm{\alpha})=2(\mu_{t-2} \alpha_{t-1}+\mu_{t-1} \bar{\alpha}_t) \frac{1}{z_t^3} \geq 0.
\end{align*}
Therefore, since $\alpha_t$'s and $\mu_t$'s are non-negative variables,  $h(\bm{\mu},\bm{z},\bm{\alpha})$  is a convex function of $z_t$ for any fixed coefficient vectors $\bm{\mu}$ and $\bm{\alpha}$. Hence, at the optimum point $(\bm{\mu}^\star,\bm{z}^\star,\bm{\alpha}^\star)$ for~\eqref{OPT2}, each $z_t$ should take one of its extreme values. 
However, recall that $z_t$'s are sorted, i.e., $z_{t-1}\leq z_t \leq z_{t+1}$. This implies that 
for the optimum vector 
$\bm{z}^\star= [z_1^\star , z_2^\star ,  \cdots , z_n^\star ]$ we have\footnote{Otherwise if $z_{t-1}^\star < z_{t}^\star< z_{t+1}^\star$, the convexity of the function $h(\bm{\mu},\bm{z},\bm{\alpha})$ implies that it can be further increased by either decreasing $z_t^\star$ to $z_{t-1}^\star$ or increasing it to $z_{t+1}^\star$.} 
$z_t^\star \in \{z_{t-1}^\star,z_{t+1}^\star\}$ for $t \in [2:n-1]$. Moreover, $0\leq z_1\leq z_2$ implies  $z_1^\star \in \{0,z_2^\star\}$, and similarly, $z_{n-1}\leq z_n \leq \infty$ implies $z_n^\star \in \{z_{n-1}^\star,\infty\}$. 
More precisely, the parameters $(z_1^\star , z_2^\star , \cdots , z_n^\star)$ can be grouped into 
\begin{equation} 
\label{boundaries}
z_1^\star = \cdots = z_{t_1}^\star=\beta_1,
\
z_{t_1+1}^\star = \cdots =z_{t_2}^\star=\beta_2,
\
\ldots,
z_{t_{m-1}+1}^\star = \cdots =z_{t_m}^\star=\beta_m,
\end{equation}
where $0\leq \beta_1 < \beta_2 < \cdots <\beta_{m-1} < \beta_m \leq \infty$. Note that $t_j-t_{j-1}$ (with $t_0=0$) is the number of $z_i$'s whose optimum value equals $\beta_j$. Also note that  $m$ is the number of \emph{distinct} values that the collection of $z_t^\star$'s take. Note that except for possibly $\beta_1$ and $\beta_m$, each other $\beta_j$ should be taken by at least two consecutive $z_t^\star$ and $z_{t+1}^\star$, that is $t_j-t_{j-1}\geq 2$ for $j\in [2:m-1]$. This implies that the number of distinct $\beta$'s cannot exceed $\frac{n+2}{2}$. This together with the fact that $m$ is a non-negative integer, imply  $1 \leq m \leq \lfloor \frac{n+2}{2} \rfloor$.  Moreover, if $\beta_1>0$, then $z_1^\star = z_2^\star = \beta_1$, and hence $t_1\geq 2$. Similarly, if  $z_n^\star < \infty$, we have $z_{n}^\star = z_{n-1}^\star$, and thus $t_{m}-t_{m-1}\geq 2$. In summary, we have
\begin{align}
\begin{cases}
  t_1 \geq 1 & \text{if } \beta_1=0, \\
  t_1 \geq 2 & \text{if } \beta_1>0, \\
  t_i - t_{i-1} \geq 2 & \text{for } i\in [2:m-1],\\
  t_m-t_{m-1} \geq 1 & \text{if } \beta_m=\infty, \\
  t_m-t_{m-1} \geq 2 & \text{if } \beta_m<\infty. \\
\end{cases}
\label{eq:number-beta}
\end{align}

\noindent
{\bf Example.} Consider a diamond network with $n=5$ relays. Then, for the optimum vector $\bm{z}^\star= [z_1^\star , z_2^\star ,  z_3^\star , z_4^\star , z_5^\star ]$ we have
\begin{align*}
z_1^\star \in \{0,z_2^\star\}, \quad z_2^\star \in \{z_1^\star,z_3^\star\}, \quad
z_3^\star \in \{z_2^\star,z_4^\star\}, \quad  z_4^\star \in \{z_3^\star,z_5^\star\}, \quad
z_5^\star \in \{z_4^\star, \infty \}. 
\end{align*}
There are several possible solutions that satisfy the conditions above. 
One possibility could be
\begin{align*}
z_1^\star = z_2^\star = z_3^\star = z_4^\star = z_5^\star = \beta_1,
\end{align*}
in which case, with reference to~\eqref{boundaries}, we have $m =1$ and $t_1 = 5$.
Alternatively, we may have 
\begin{align*}
 z_1^\star = 0 = \beta_1, \qquad
 z_2^\star = z_3^\star = \beta_2,\qquad
z_4^\star = z_5^\star = \beta_3,
\end{align*}
in which case, with reference to~\eqref{boundaries}, we have $m =3$, $t_1 = 1$, $t_2 =3$ and $t_3 = 5$.
Note that, since $\beta_1 = 0$, we have $t_1=1$.
 \hfill$\square$

We now leverage~\eqref{boundaries} to rewrite $g_t(\bm{z},\bm{\alpha})$ in~\eqref{cutset} in terms of the optimum values of $z^\star_{t}$. In particular, we focus on functions $g_t(\bm{z},\bm{\alpha})$ for $t\in \{t_0=0,t_1,t_2,\dots, t_m=n\}$. Let $\bm{\beta} =(\beta_1,\beta_2,\dots, \beta_m)$. 
First, for $t=t_0 =0$, noting that $z_{-1}=z_0=-1$, we have 
\begin{align}
\label{eq:C0}
  g_0(\bm{z}^\star,\bm{\alpha})&=\bar{\alpha}_1 \left (\frac{1}{z_1^\star}+1 \right ) + \alpha_1 \left (\frac{1}{z_2^\star}+1 \right ) \nonumber \\
  & \stackrel{{\rm{(a)}}}{\leq} \bar{\alpha}_1 \left (\frac{1}{z_1^\star}+1 \right ) + \alpha_1 \left (\frac{1}{z_1^\star}+1 \right ) 
   = 1+\frac{1}{\beta_1} \triangleq {G}_0(\bm{\beta}),
\end{align}
where the inequality in $\rm{(a)}$ follows from  $z_1^\star \leq z_2^\star$. Next,  for all $t \in \{t_1,t_2,\dots, t_{m-1}\}$, we obtain
  \begin{align}
\label{eq:Ci}
   g_{t_i}(\bm{z}^\star,\bm{\alpha}) & =\!
 \alpha_{t_i} (z_{t_i}^\star+1) + \bar{\alpha}_{t_i} (z_{t_i-1}^\star+1) + \bar{\alpha}_{t_i+1} \left (\frac{1}{z_{t_i+1}^\star}+1 \right ) + \alpha_{t_i+1} \left (\frac{1}{z_{t_i+2}^\star}+1 \right ) \nonumber\\
&   \stackrel{{\rm{(b)}}}{\leq}\!(\alpha_{t_i}+\bar{\alpha}_{t_i})(\beta_i+1) + (\alpha_{t_i+1}+\bar{\alpha}_{t_i+1}) \left (\frac{1}{\beta_{i+1}}+1 \right ) 
\!=\!2+\beta_i + \frac{1}{\beta_{i+1}} \triangleq G_i(\bm{\beta}).
\end{align}
Note that $\rm{(b)}$ follows from the fact that $t_i-t_{i-1}\geq 2$, which implies $z_{t_i-1}^\star = z_{t_i}^\star = \beta_i$, and similarly $z^\star_{t_i+1} = z_{t_i+2}^\star = \beta_{i+1}$. However, for $t_1 =1$ we have $z_0^\star = -1$, and hence $\rm{(b)}$ is an inequality, and similarly for $t_{m}-t_{m-1}=1$ we have $z^\star_{t_{m+1}} = z^\star_{n+1} = -1$ and hence $\rm{(b)}$ is also an inequality.
Finally, since $z_{n+1}=z_{n+2}=-1$ for $t=t_m=n$, we can write
\begin{align}
\label{eq:Cm}
        g_n(\bm{z}^\star,\bm{\alpha})&=\alpha_n(z_n^\star+1)+\bar{\alpha}_n (z_{n-1}^\star+1) \nonumber \\
        & \stackrel{{\rm{(c)}}}{\leq} \alpha_n(z_n^\star+1)+\bar{\alpha}_n (z_{n}^\star+1) 
        =1+\beta_m \triangleq {G}_m(\bm{\beta}),
\end{align}
where the inequality in $\rm{(c)}$ holds since $z_n^\star \geq z_{n-1}^\star$. Therefore, using \eqref{eq:C0}-\eqref{eq:Cm} we can upper bound the objective function of the optimization problem in~\eqref{OPT2} as
\begin{align}
h(\bm{\mu},\bm{z},\bm{\alpha}) = \sum_{i=0}^n \mu_i g_i(\bm{z}^\star,\bm{\alpha}) &= \sum_{i\in \{t_0,\dots, t_m\}} \mu_i g_i(\bm{z}^\star,\bm{\alpha}) + \sum_{i\notin \{t_0,\dots, t_m\}} \mu_i g_i(\bm{z}^\star,\bm{\alpha}) \nonumber\\
&\leq\sum_{i=0}^{m} \mu_{t_i} G_i(\bm{\beta}) + \sum_{i\notin \{t_0,\dots, t_m\}} \mu_i g_i(\bm{z}^\star,\bm{\alpha}).
\label{eq:OPT2-cost}
\end{align}

\noindent\textbf{Further Reduction of the Constraints.} 
Recall that the optimization problem in~\eqref{OPT2} includes a minimization with respect to $\bm{\mu}$. Hence, setting more restrictions on the variable $\bm{\mu}$ can only increase the optimum cost function. Let us set  $\mu_t=0$ for $t\notin \{t_0=0, t_1, t_2, \dots, t_m=n\}$, and $\mu_{t_i} = \tilde{\mu}_i$   for $i=\{0,1,\dots, m\}$. Here $\tilde{\mu}_i$'s are arbitrary non-negative variables that sum up to $1$. Incorporating  this and the bound in~\eqref{eq:OPT2-cost} into the optimization problem in~\eqref{OPT2} leads us to 
\label{OPT3}
\begin{align} 
    \begin{split}
    \mathrm{OPT}_3 =    \min_{\tilde{\bm{\mu}}} \max_{m, \bm{\beta}} & \ \sum_{t=0}^m \tilde{\mu}_t G_t(\bm{\beta})\\
        \text{s.t. }  & \ \tilde{\mu}_t \geq 0 , \quad \forall  t \in [0:m], \\
         & \ \sum_{t=0}^m \tilde{\mu}_t =1, \\
        & \ 0 \leq \beta_1 <\beta_2< \cdots < \beta_m \leq \infty.
    \end{split}
    \label{OPT3}
\end{align}
Note that $\mathrm{OPT}_2 \leq \mathrm{OPT}_3$ since: (i) the objective function in~\eqref{OPT3} is an upper bound for that of~\eqref{OPT2}, and (ii) the feasible set for $\bm{\mu}$ in~\eqref{OPT2} is a super-set of that of $\tilde{\bm{\mu}}$ in~\eqref{OPT3}.

Finally, we can again apply Lemma~\ref{Lemma2} on the optimization problem in~\eqref{OPT3} and rewrite it as 
\begin{equation} \label{OPT4}
    \begin{split}
         \mathrm{OPT}_4 = \max_{m\in \left[1: \lfloor \frac{n+2}{2}\rfloor\right]} \max_{\bm{\beta}} & \ \Phi \\
        \text{s.t. }  &  \ \Phi \leq {G}_i(\bm{\beta}), \quad \forall i \in [0:m], \\
        & \ 0 \leq \beta_1 < \beta_2 < \cdots < \beta_m \leq \infty,
    \end{split}
\end{equation}
where $G_i(\bm{\beta})$'s are defined in~\eqref{eq:C0}-\eqref{eq:Cm}. 
Note that Lemma~\ref{Lemma2} implies that $\mathrm{OPT}_3 = \mathrm{OPT}_4$.

\noindent\textbf{Analysis of the Inner Optimization Problem.} 
Let us fix $m$ in the optimization problem in \eqref{OPT4}, and  further analyze the inner optimization problem. This yields
\begin{equation} \label{OPT5}
    \begin{split}
         \mathrm{OPT}_5(m) =  \max_{\bm{\beta}} & \ \Phi \\
        \text{s.t. }  &  \ \Phi \leq {G}_i(\bm{\beta}), \quad \forall i \in [0:m], \\
        & \ 0 \leq \beta_1 < \beta_2 < \cdots < \beta_m \leq \infty,
    \end{split}
\end{equation}
for every fixed $m\in \left[1: \lfloor \frac{n+2}{2}\rfloor\right]$.

The following lemma highlights some important properties of the optimum solution of the optimization problem defined in~\eqref{OPT5}.
\begin{lemma} 
\label{lemm:OptSol}
For every integer $m$, there exists some solution  $(\bm{\beta}^\star, \Phi^\star)$ for the optimization problem in~\eqref{OPT5} that satisfies 
\begin{align*}
G_i(\bm{\beta}^\star)  = \Phi^\star, \qquad \forall i \in [1:m-1].
\end{align*}
Moreover, if $\beta_1^\star > 0$, we have ${G}_0(\bm{\beta}^\star)=\Phi^\star$, 
and similarly, if $\beta_m^\star<\infty$, then 
${G}_m(\bm{\beta}^*)=\Phi^\star$.
\end{lemma}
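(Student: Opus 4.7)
My plan is to prove the lemma by a direct perturbation argument applied to any optimum of~\eqref{OPT5}. The crucial structural observation is how each coordinate $\beta_j$ enters the constraint functions: for $j \in [1:m-1]$, $\beta_j$ appears only in $G_{j-1} = 2 + \beta_{j-1} + 1/\beta_j$ (strictly decreasing in $\beta_j$) and in $G_j = 2 + \beta_j + 1/\beta_{j+1}$ (strictly increasing in $\beta_j$), while every other $G_i$ is independent of $\beta_j$. The analogous statements hold for $\beta_1$ (entering $G_0$ and $G_1$) and $\beta_m$ (entering $G_{m-1}$ and $G_m$).

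I would first establish that the supremum $\Phi^\star$ is attained. This follows by passing to the closure of the feasible region, where $\beta_1 = 0$ renders $G_0 = 1 + 1/\beta_1 = +\infty$ vacuous and $\beta_m = +\infty$ renders $G_m$ vacuous; on this closure, the map $\bm{\beta} \mapsto \min_i G_i(\bm{\beta})$ is upper semicontinuous, so the supremum is attained. Let $(\bm{\beta}^\star, \Phi^\star)$ be any such optimum.

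Now suppose, toward modification, that $G_j(\bm{\beta}^\star) > \Phi^\star$ for some $j \in [1:m-1]$. I would decrease $\beta_j$ by a small $\epsilon > 0$: by the structural observation this strictly increases $G_{j-1}$, strictly decreases $G_j$ (but $G_j > \Phi^\star$ persists for $\epsilon$ sufficiently small), and leaves all other $G_i$'s unchanged. Hence $\min_i G_i$ does not decrease, so the perturbed point is still optimal while $G_j$ has moved strictly closer to $\Phi^\star$. Pushing $\epsilon$ as far as possible, either $G_j$ reaches $\Phi^\star$ (and the constraint becomes tight), or $\beta_j$ merges with $\beta_{j-1}$. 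Iterating over every non-tight $j \in [1:m-1]$ yields an optimum satisfying the claimed interior tightness. The boundary claims follow symmetrically: if $\beta_1^\star > 0$ and $G_0(\bm{\beta}^\star) > \Phi^\star$, I would increase $\beta_1$ slightly, which decreases $G_0$ and increases $G_1$, and push until $G_0 = \Phi^\star$; similarly for $G_m$ when $\beta_m^\star < \infty$.

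The main obstacle is the strict ordering $\beta_1 < \beta_2 < \cdots < \beta_m$: when the perturbation of $\beta_j$ drives it onto $\beta_{j-1}$ before $G_j$ tightens, the perturbed point leaves the strict interior of the feasible region. The resolution is to work in the closure and observe that two $\beta$-values coinciding simply means the solution effectively lives in an instance of~\eqref{OPT5} with strictly fewer distinct $\beta$-values, which is harmless because the outer maximization in~\eqref{OPT4} ranges over all admissible $m \in [1:\lfloor (n+2)/2 \rfloor]$, so such a degenerate optimum at parameter $m$ is already captured at a smaller parameter. Setting this technicality aside, the perturbation argument establishes the tightness structure asserted in the lemma.
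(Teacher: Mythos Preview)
Your argument rests on the same key observation as the paper's: each $\beta_j$ enters only $G_{j-1}$ and $G_j$, with opposite monotonicity, so one can slide $\beta_j$ to move a slack constraint toward tightness without lowering $\min_i G_i$. The paper packages this as a minimal-counterexample: among all optima it selects the one with smallest index $q$ for which $G_q>\Phi^\star$, then moves $\beta_q$ until $G_q=G_{q-1}$ (both remaining strictly above $\Phi^\star$), contradicting the minimality of $q$ in a single step. Your direct-iteration version is an equally natural packaging, but one point deserves care: tightening $G_j$ by decreasing $\beta_j$ strictly raises $G_{j-1}$, so a naive pass ``over every non-tight $j$'' can create new slack constraints and need not terminate as stated. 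Specifying an order (process $j=m-1$ down to $j=1$, so that each step leaves the already-tightened $G_{j+1},\dots,G_{m-1}$ untouched), or replacing the iteration by the paper's extremal selection, closes this; with that addressed, the two proofs are interchangeable.
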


\begin{proof}
We use contradiction to formally prove the claim in Lemma~\ref{lemm:OptSol}. Let $\Phi^\star$ be the optimum value of the objective function, which can be attained for each $\bm{\beta}\in\bm{B}$, where $\bm{B}$ denotes the feasible set of $\bm{\beta}$ i.e., 
\begin{align*}
\min_{i\in [0:m]} G_i(\bm{\beta}) = \Phi^\star, \qquad \forall \bm{\beta} \in \bm{B}. 
\end{align*}
If the first claim in Lemma~\ref{lemm:OptSol} does not hold, then for every $\bm{\beta}\in \bm{B}$ there exists some minimum $q(\bm{\beta})\in [1:m-1]$ such that $G_{q(\bm{\beta})}(\bm{\beta}) > \Phi^\star$, i.e., 
$G_{j}(\bm{\beta}) = \Phi^\star$  for every $j<q(\bm{\beta})$. 
 Among all optimum points $\bm{\beta}\in \bm{B}$, let $\bm{\beta}^\star$ be the one with minimum $q(\bm{\beta}^\star)$, that is, $q(\bm{\beta}) \geq q(\bm{\beta}^\star)\triangleq q$.



We have
\begin{align*}
    2+\beta_{q}^\star + \frac{1}{\beta_{q+1}^\star}={G}_{q }(\bm{\beta}^\star) > {G}_{q -1}(\bm{\beta}^\star)=2+\beta_{q -1}^\star + \frac{1}{\beta_{q }^\star}= \Phi^\star.
\end{align*}
It is straight-forward to see that there exists some $\hat{ \beta}_{q }$ such that $\beta_{q -1}^\star<\hat{ \beta}_{q } <\beta_{q }^\star$ and 
\begin{align*}
    2+\hat{ {\beta}}_q + \frac{1}{\beta_{q+1}^\star}=   2+\beta_{q-1}^\star + \frac{1}{\hat{ {\beta}}_q}.
\end{align*}
Thus,  for the vector $\hat{\bm{\beta}} = [\beta_1^\star , \cdots , \beta_{q-1}^\star ,\hat{\beta}_q , \beta_{q+1}^\star , \cdots , \beta_m^\star ]$  we have 
\begin{align}
\begin{split}
{G}_q( \bm{\beta}^\star) > {G}_q(\hat{\bm{\beta}}) = {G}_{q-1}(\hat{\bm{\beta}}) > {G}_{q-1}( \bm{\beta}^\star) = \Phi^\star,\\
G_j(\hat{\bm{\beta}}) = G_j(\bm{\beta}^\star) \geq \Phi^\star,\qquad j\in [0:m] \setminus \{q,q-1\}. 
\end{split} 
\label{eq:Lemma-equal-contradict}
\end{align}
Therefore $(\hat{\bm{\beta}}, \Phi^\star)$  is an optimum solution  of the optimization problem, and we have $\hat{\bm{\beta}} \in \bm{B}$. However, from \eqref{eq:Lemma-equal-contradict} we have $q(\hat{\bm{\beta}}) \leq q-1 = q(\bm{\beta}^\star) -1 $, which is in contradiction with the definition of $q=q(\bm{\beta}^\star)$ and $\bm{\beta}^\star$. 
Similarly, we can show that if $\beta_1^\star>0$ then ${G}_0(\bm{\beta}^\star)=\Phi^\star$, and if $\beta_m^\star<\infty$ then ${G}_m(\bm{\beta}^\star)=\Phi^\star$. This concludes the proof of Lemma~\ref{lemm:OptSol}.
\end{proof}
We now analyze the structure of $\mathrm{OPT}_5(m)$. In particular, for a given $m$, we will find the optimum $\bm{\beta}^\star$ that satisfies Lemma~\ref{lemm:OptSol}. 
Towards this end, we distinguish the following two cases. 
\begin{enumerate}
\item[(I)] If $\beta_1^\star>0$, then we define
\begin{align}
\label{eq:B1g0}
b_0=1, \qquad  b_i=\frac{1}{\prod_{k=1}^i  \beta_k^\star}, \ \forall i \in [1:m].
\end{align}
\item[(II)] If $\beta_1^\star = 0$, then we define
\begin{align}
\label{eq:B1e0}
b_0=0, \qquad b_1=1, \qquad b_i=\frac{1}{\prod_{k=2}^i \beta_k^\star}, \ \forall i \in [2:m]. 
\end{align}
\end{enumerate}
Under both cases we have
\begin{align*}
\beta_i^\star=\frac{b_{i-1}}{b_i}, \qquad \forall i \in [1:m].
\end{align*}
Using the change of variables above and the fact that ${G}_i(\bm{\beta}^\star)=\mathrm{OPT}_5(m), i \in [1:m-1]$ (see Lemma~\ref{lemm:OptSol}), we get that
\begin{align*}
G_i(\bm{\beta}^\star) = 2+\beta_i^\star + \frac{1}{\beta_{i+1}^\star} = 2+ \frac{b_{i-1}}{b_i}+\frac{b_{i+1}}{b_i}, \qquad \forall i \in [1:m-1].
\end{align*}
Then, for a given $n$ (number of relays in the network) and $m$ (number of relays with distinct channel gains in the network), we define
\begin{equation}
\label{eq:sn}
\sigma_{n,m} \triangleq \mathrm{OPT}_5(m)-2 =  \frac{b_{i-1}}{b_i}+\frac{b_{i+1}}{b_i}, \qquad \forall i \in [1:m-1],
\end{equation}
which implies
\begin{equation}
\label{eq:LinRec}
    b_{i+1}-\sigma_{n,m} b_i +b_{i-1}=0 ,  \quad \forall i \in [1:m-1].
\end{equation}
The above expression is a linear homogeneous recurrence relation of order $2$, and hence its solution can be written as \cite{wilf2005generatingfunctionology}
\begin{equation}\label{b_i}
    b_i=uU^i+vV^i, \qquad i \in [0:m],
\end{equation}
where 
$U$ and $V$ are the roots\footnote{The solution format in \eqref{b_i} holds only if the characteristic equation in~\eqref{eq:charfun} has simple (non-repeated) roots. Note that if $\sigma_{n,m}=2$ then we have $U=V=1$, and hence the solution of the recurrence relation would be $b_i=u+vi$. This is, however, a monotonic function of $i$, and cannot satisfy both the initial and final conditions of the recurrence relation. \label{ftnt:sigma2}
} of the characteristic equation  of the recurrence relation in~\eqref{eq:LinRec}, that is, 
\begin{align}
\label{eq:charfun}
X^2 - \sigma_{n,m} X + 1 = 0.
\end{align}
Moreover, $u$ and $v$ in \eqref{b_i} can be found from the initial conditions of the recurrence relation. In particular, under case (I) and $\beta_1^\star >0$ we have $b_0=1$ and $b_1=\frac{1}{\beta_1^\star} = G_0(\bm{\beta}^\star)-1 = \mathrm{OPT}_5(m) -1 = \sigma_{n,m}+1$. Similarly, under case (II) and $\beta_1^\star =0$ we have $b_0=0$ and $b_1=1$. 

Once $u$ and $v$ are found, we can fully express $b_i$ as a function of $\sigma_{n,m}$, for $i\in [0:m]$. Then, we can use the final condition for $\beta_m^\star$ to identify the value of $\sigma_{n,m}$. More precisely, if $\beta_m^\star=\infty$ then $b_m=0$. Otherwise, if $\beta_m^\star<\infty$, from Lemma~\ref{lemm:OptSol} we have $\sigma_{n,m}+2=\mathrm{OPT}_5(m) = {G}_m(\bm{\beta}^\star)=1+\beta_m^\star$, which implies $1+\sigma_{n,m}=\beta_m^\star=\frac{b_{m-1}}{b_m}$. 
The optimum value of $\sigma_{n,m}$ is given in the following proposition. The proof of this proposition  can be found in Appendix~\ref{app:4cases}.
\begin{prop}
\label{lem:Smn}
The optimal value $\sigma_{n,m}$ defined in \eqref{eq:sn} is given by
\begin{align}
\begin{split}
\sigma_{n,m} =
\left \{
\begin{array}{ll}
2 \cos \left( \frac{2 \pi}{2m +2}\right ) & \text{if} \ \beta_1^\star > 0 \ \text{and} \ \beta_m^\star < \infty,
\\
2 \cos \left( \frac{2 \pi}{2m +1}\right ) & \text{if} \ \beta_1^\star > 0 \ \text{and} \ \beta_m^\star = \infty,
\\
2 \cos \left( \frac{2 \pi}{2m +1}\right ) & \text{if} \ \beta_1^\star = 0 \ \text{and} \ \beta_m^\star < \infty,
\\
2 \cos \left( \frac{2 \pi}{2m}\right ) & \text{if} \ \beta_1^\star = 0 \ \text{and} \ \beta_m^\star = \infty.
\end{array}
\right .
\end{split}
\label{eq:sigma}
\end{align}
\end{prop}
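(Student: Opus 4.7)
The plan is to parametrize $\sigma_{n,m} = 2\cos\theta$ with $\theta \in (0,\pi)$—the excluded endpoint $\sigma_{n,m}=2$ is already handled by the footnote—so that the characteristic equation $X^2-\sigma_{n,m}X+1=0$ has distinct conjugate roots $e^{\pm i\theta}$ on the unit circle. The general solution of the recurrence \eqref{eq:LinRec} then takes the form $b_i = A\cos(i\theta) + B\sin(i\theta)$, and the four sub-cases in \eqref{eq:sigma} arise from pairing the two possible initial conditions ($\beta_1^\star = 0$ or $>0$) with the two possible terminal conditions ($\beta_m^\star = \infty$ or $<\infty$).

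I would first fix $A$ and $B$ from the initial conditions given in \eqref{eq:B1g0} and \eqref{eq:B1e0}. In the branch $\beta_1^\star = 0$, the conditions $b_0=0,\ b_1=1$ give immediately
\begin{equation*}
b_i = \frac{\sin(i\theta)}{\sin\theta}.
\end{equation*}
In the branch $\beta_1^\star>0$, the conditions $b_0=1,\ b_1=1+2\cos\theta$ together with the half-angle identities $1+\cos\theta = 2\cos^2(\theta/2)$ and $\sin\theta = 2\sin(\theta/2)\cos(\theta/2)$ collapse, after a sum-to-product step, to
\begin{equation*}
b_i = \frac{\sin\bigl((2i+1)\theta/2\bigr)}{\sin(\theta/2)}.
\end{equation*}
Next I would impose the terminal condition. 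If $\beta_m^\star=\infty$, then $b_m=0$ and the appropriate sine above must vanish, yielding $\sin(m\theta)=0$ in the first branch and $\sin((2m+1)\theta/2)=0$ in the second. If $\beta_m^\star<\infty$, then Lemma~\ref{lemm:OptSol} gives the ratio condition $b_{m-1}/b_m = 1+2\cos\theta$; expanding $\sin((m-1)\theta)=\sin(m\theta)\cos\theta-\cos(m\theta)\sin\theta$ (and the analogous half-integer shift for the second branch) and applying $1+\cos\theta = 2\cos^2(\theta/2)$ reduces the ratio condition, after cancellation, to $\sin((2m+1)\theta/2)=0$ in the first branch and $\sin((m+1)\theta)=0$ in the second.

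To close the proof I would pick the smallest positive solution of each sine-vanishing equation, namely $\theta = \pi/m,\ 2\pi/(2m+1),\ 2\pi/(2m+1),\ 2\pi/(2m+2)$ for the four combinations, since $\sigma_{n,m}=2\cos\theta$ is maximized by the smallest admissible $\theta>0$. Matching these to the ordering of cases in \eqref{eq:sigma} delivers the stated formulas. One still owes a brief admissibility check: that the resulting $b_i$'s are strictly positive for $i\in[0{:}m-1]$ (so each $\beta_i^\star = b_{i-1}/b_i$ is well-defined and positive) and that the sequence $\beta_i^\star$ is strictly increasing, the latter following from the identity $b_i^2-b_{i-1}b_{i+1} = \sin^2\theta/\sin^2\theta' > 0$ after a product-to-sum expansion (with $\theta'\in\{\theta,\theta/2\}$). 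The main obstacle is the trigonometric reduction of the ratio condition in the $\beta_m^\star<\infty$ sub-cases; in particular, Case 1 simultaneously involves the shifted argument $(2m-1)\theta/2$ and the factor $1+2\cos\theta$, and this is where an index or sign slip is most likely to occur if the half-angle substitution is not carried out cleanly.
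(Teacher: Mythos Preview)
Your proposal is correct and follows essentially the same route as the paper: both solve the recurrence \eqref{eq:LinRec} via its characteristic equation, fix the two free constants from the initial data in \eqref{eq:B1g0}--\eqref{eq:B1e0}, impose the terminal condition supplied by Lemma~\ref{lemm:OptSol}, and then pick the extremal admissible root. The only difference is representational---you parametrize $\sigma_{n,m}=2\cos\theta$ and obtain the compact closed forms $b_i=\sin(i\theta)/\sin\theta$ and $b_i=\sin\bigl((2i{+}1)\theta/2\bigr)/\sin(\theta/2)$, reducing the terminal condition to a sine-vanishing equation, whereas the paper keeps the exponential form $b_i=uU^i+vV^i$ and reduces to a root-of-unity condition such as $U^{2m+2}=1$; these are equivalent, and your added admissibility check (positivity and monotonicity of the $\beta_i^\star$) is a welcome detail the paper leaves implicit.
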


\noindent\textbf{Optimizing Over $m$.} 
Recall from \eqref{eq:sn} that $\mathrm{OPT}_5(m) = \sigma_{n,m}+2$. Therefore, Proposition~\ref{lem:Smn} fully characterizes the optimum solution of the maximization problem in \eqref{OPT5}. 
The last step of the proof of the ratio guarantee in Theorem~\ref{thm1} consists of finding the optimal solution for the optimization problem in~\eqref{OPT4}. Recall from \eqref{OPT4} that 
\begin{align}
         \mathrm{OPT}_4 = \max_{m\in \left[1: \lfloor \frac{n+2}{2}\rfloor\right]}  \mathrm{OPT}_5(m) = 2+ \max_{m\in \left[1: \lfloor \frac{n+2}{2}\rfloor\right]} \sigma_{n,m},
\label{eq:OPT4-inner}
\end{align}
where $\sigma_{n,m}$ is given in \eqref{eq:sigma}. The following proposition provides the optimum $m$, and hence the optimum solution for the optimization problem in \eqref{OPT4}.

\begin{prop}
\label{lem:FinalOpt}
The optimal solution for the optimization problem in~\eqref{eq:OPT4-inner} is given by
\begin{align*}
\mathrm{OPT}_4 =2+ 2 \cos \left( \frac{2 \pi}{n+2}\right ).
\end{align*}
\end{prop}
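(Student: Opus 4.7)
My plan is to reduce Proposition~\ref{lem:FinalOpt} to a purely discrete optimization using~\eqref{eq:OPT4-inner}: it suffices to show that $\max_{m\in[1:\lfloor(n+2)/2\rfloor]} \sigma_{n,m} = 2\cos(2\pi/(n+2))$. Every expression appearing in Proposition~\ref{lem:Smn} has the form $2\cos(2\pi/D)$ for a positive integer $D \in \{2m,\,2m+1,\,2m+2\}$ depending on the boundary regime, and the map $D \mapsto 2\cos(2\pi/D)$ is strictly increasing for integer $D \geq 2$ (because then $2\pi/D \in (0,\pi]$, where cosine is strictly decreasing). Hence maximizing $\sigma_{n,m}$ in any one regime is equivalent to maximizing the denominator $D$.

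Next I would turn the counting conditions in~\eqref{eq:number-beta} into case-specific upper bounds on $m$. Writing $n = t_m = t_1 + \sum_{i=2}^{m-1}(t_i - t_{i-1}) + (t_m - t_{m-1})$, each middle gap costs at least $2$ units while the first and last contributions cost $1$ or $2$ each, according to whether $\beta_1^\star = 0$ and whether $\beta_m^\star = \infty$. Summing gives four bookkeeping inequalities, and hence four case-specific upper bounds on $m$ together with the corresponding maximum denominators: (i) $\beta_1^\star > 0,\beta_m^\star < \infty$ gives $m \leq \lfloor n/2\rfloor$ and $D_{\max} = 2\lfloor n/2\rfloor + 2$; (ii) $\beta_1^\star > 0,\beta_m^\star = \infty$ and (iii) $\beta_1^\star = 0,\beta_m^\star < \infty$ both give $m \leq \lfloor (n+1)/2\rfloor$ and $D_{\max} = 2\lfloor (n+1)/2\rfloor + 1$; (iv) $\beta_1^\star = 0,\beta_m^\star = \infty$ gives $m \leq \lfloor (n+2)/2\rfloor$ and $D_{\max} = 2\lfloor (n+2)/2\rfloor$.

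Finally, a short parity check closes the argument. If $n$ is even, regimes (i) and (iv) both attain $D=n+2$, while (ii) and (iii) reach only $D=n+1$; if $n$ is odd, regimes (ii) and (iii) attain $D=n+2$, while (i) and (iv) reach only $D=n+1$. In either parity, the overall maximum feasible denominator is exactly $n+2$ and is realized by one of the four regimes. Substituting back into the cosine yields $\max_m \sigma_{n,m} = 2\cos(2\pi/(n+2))$, and therefore $\mathrm{OPT}_4 = 2 + 2\cos(2\pi/(n+2))$, as claimed. The only mildly delicate step is the bookkeeping of the boundary contributions in~\eqref{eq:number-beta} under all four regimes; once that table is set up, the case analysis and the parity check are both routine, and attainment of $D=n+2$ is inherited from the closed-form recurrence solution in~\eqref{b_i} (whose coefficients $u,v$ can always be chosen to meet the initial and final conditions once $\sigma_{n,m}$ is fixed at the claimed value).
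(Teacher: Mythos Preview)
Your proposal is correct and follows essentially the same route as the paper: split into the four boundary regimes of Proposition~\ref{lem:Smn}, use the bookkeeping in~\eqref{eq:number-beta} to upper bound $m$ in each regime, observe that each expression $2\cos(2\pi/D)$ is increasing in $D$ (equivalently in $m$), and then read off the maximum. The one refinement in your write-up is that you make the integrality of $m$ explicit via a parity check (showing that $D=n+2$ is reached in regimes (i)/(iv) for $n$ even and in regimes (ii)/(iii) for $n$ odd), whereas the paper simply substitutes the real bound $m\le n/2$, $m\le (n+1)/2$, or $m\le (n+2)/2$ directly into the cosine and declares each case equal to $2\cos(2\pi/(n+2))$; your version is therefore a bit more careful on this point, but the substance is the same.
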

\begin{proof}
In order to find the optimal solution $\mathrm{OPT}_4$ for the optimization problem in~\eqref{eq:OPT4-inner}, we need to  compute the maximum value of $\sigma_{n,m}$ over $m$ for the four different cases in Proposition~\ref{lem:Smn}. Note that all the four expressions in Proposition~\ref{lem:Smn} are increasing functions of $m$. Hence, we only need to find the maximum possible value of $m$ in each case. We can analyze the following four cases, separately. 
\begin{enumerate}
\item $\beta_1^\star > 0$ and $\beta_m^\star < \infty$. For this case, from~\eqref{eq:number-beta} we have $t_1\geq 2$ and $t_i-t_{i-1}\geq 2$ for $i \in [2:m]$.
Thus, since $t_m=n$, we get
\begin{equation*}
    n=t_m=\sum_{i=2}^m (t_i-t_{i-1})+t_1 \geq 2(m-1)+2=2m,
\end{equation*}
which implies $m \leq \frac{n}{2}$, and hence
\begin{equation*}
    \mathrm{OPT}_4= 2+ \max_{m\leq \frac{n}{2}} \sigma_{n,m} =2+ \max_{m\leq \frac{n}{2}} 2 \cos \left (\frac{2\pi}{2m+2} \right )  =2+ 2 \cos \left (\frac{2\pi}{n+2} \right).
\end{equation*}
\item $\beta_1^\star > 0$ and $\beta_m^\star = \infty$. For this case, from~\eqref{eq:number-beta}  we obtain
$t_1\geq 2$,  $t_m-t_{m-1}\geq 1$ and $t_i-t_{i-1}\geq 2$ for  $i \in [2:m-1]$. Therefore,
\begin{equation*}
    n=t_m=(t_m-t_{m-1})+\sum_{i=2}^{m-1} (t_i-t_{i-1})+t_1 \geq 1+2(m-2)+2=2m-1,
\end{equation*}
which implies $m\leq\frac{n+1}{2}$. Therefore,
\begin{equation*}
   \mathrm{OPT}_4=2+\max_{m\leq \frac{n+1}{2}} \sigma_{n,m} = 2+ \max_{m\leq \frac{n+1}{2}} 2\cos{\left(\frac{2\pi }{2m+1}\right)} =2+2\cos{\left(\frac{2\pi }{n+2}\right)}. 
\end{equation*}

\item $\beta_1^\star = 0$ and $\beta_m^\star < \infty$. For this case, from~\eqref{eq:number-beta}  we have 
$t_1 \geq 1$ and $t_i-t_{i-1}\geq 2$ for $i \in [2:m]$. Thus,
\begin{equation*}
    n=t_m=(t_m-t_{m-1})+\sum_{i=2}^{m-1}(t_i-t_{i-1})+t_1 \geq 2(m-1)+1=2m-1,
\end{equation*}
which implies $m\leq \frac{n+1}{2}$. Therefore, we obtain
\begin{equation*}
    \mathrm{OPT}_4= 2+ \max_{m\leq\frac{n+1}{2}} \sigma_{n,m} =2+ \max_{m\leq\frac{n+1}{2}} 2 \cos{\left(\frac{2\pi}{2m+1}\right)}  = 2+ 2 \cos{\left(\frac{2\pi}{n+2}\right)}.
\end{equation*}
\item $\beta_1^\star = 0$ and $\beta_m^\star = \infty$. Finally, for this case, from~\eqref{eq:number-beta}  we can write
$t_1\geq 1$,  $t_m-t_{m-1}\geq 1$ and $t_i-t_{i-1}\geq 2$ for $i \in [2:m-1]$.
Hence,
\begin{equation*}
    n=t_m=(t_m-t_{m-1})+\sum_{i=2}^{m-1}(t_i-t_{i-1})+t_1 \geq 1 +2(m-2)+1=2m-2 ,
\end{equation*}
which implies $m\leq\frac{n+2}{2}$. Therefore, we obtain
\begin{equation*}
    \mathrm{OPT}_4= 2+ \max_{m \leq \frac{n+2}{2}} \sigma_{n,m}=2+\max_{m \leq \frac{n+2}{2}} 2 \cos{\left(\frac{2\pi}{2m}\right)}= 2 +2 \cos{\left(\frac{2\pi}{n+2}\right)}.
\end{equation*}
\end{enumerate}
Therefore, for all four cases we obtain $\mathrm{OPT}_4 = 2+ 2\cos{\left(\frac{2\pi}{n+2}\right)}$, which proves our claim in Proposition~\ref{lem:FinalOpt}.
 This concludes the proof of Proposition~\ref{lem:FinalOpt}.
\end{proof}

In summary, by collecting all the results above together, we have proved that for any Gaussian half-duplex diamond $n$-relay network $\mathcal{N}$ we always have
\begin{align}
\label{eq:SummBounds}
{\mathsf{C}}_{n}(\mathcal{N}) = \mathrm{OPT}_0 \leq \mathrm{OPT}_1 = \mathrm{OPT}_2 \leq \mathrm{OPT}_3 = \mathrm{OPT}_4 = 2 + 2 \cos{\left(\frac{2\pi}{n+2}\right)},
\end{align}
where $\mathrm{OPT}_0$, $\mathrm{OPT}_1$, $\mathrm{OPT}_2$ and $\mathrm{OPT}_3$ are the optimal solutions of the optimization problems in~\eqref{eq:opt:0}, in~\eqref{Opti}, in~\eqref{OPT2} and in~\eqref{OPT4}, respectively.
This proves the inequality in~\eqref{eq:opt:-1}, and hence concludes the proof of the ratio guarantee in Theorem~\ref{thm1}.

\section{The Worst Networks: Proof of the Tightness of Theorem~\ref{thm1}}
\label{sec:TightTh}
We here prove that the bound in~\eqref{ratio} is tight, that is, for any number of relays, there exists some networks for which $\sfC(\cN_1)/\sfC_n(\cN) = 1/(2+2\cos(2\pi/(n+2)))$. 
Towards this end, for every integer $n$ we provide some constructions of  half-duplex diamond $n$-relay networks for which the best relay has an approximate capacity that satisfies the bound in~\eqref{ratio} with equality.

Our constructions are inspired by the discussion and results in Section~\ref{sec:Thm1}. More precisely, we need to satisfy all
the bounds in~\eqref{eq:SummBounds} with equality.  

\noindent {\textbf{Case A.1: }} Let $n=2k$ be an even integer, and consider a half-duplex diamond $n$-relay network $\mathcal{N}$ with
\begin{align}
\label{eq:WN1}
\begin{split}
\ell_{2i} =\ell_{2i-1}&= 
\frac{2\sin(\theta) \sin \left ( i \theta \right ) }{\cos \left ( i \theta \right ) - \cos \left ( (i+1) \theta \right )}, \qquad i \in [1:k],
\\
r_{2i} =r_{2i-1} &= 
\frac{2\sin(\theta) \sin \left ( i \theta \right )}{\cos \left ( (i-1) \theta \right )-\cos \left ( i \theta \right ) }, \qquad i \in [1:k],
\\
\theta & = \frac{2 \pi}{n+2}.
\end{split}
\end{align}
It is not difficult to see that, for the network in~\eqref{eq:WN1}, 
we have that $\ell_1 \leq \ell_2 \leq \ldots \leq \ell_n$,
$r_1 \geq r_2 \geq \ldots \geq r_n$. Moreover, for every relay $t \in [1:n]$ with $i=\lfloor \frac{t+1}{2}\rfloor$, we have 
\begin{align*}
\mathsf{C}_1(\mathcal{N}_t) = \frac{\ell_t r_t}{\ell_t +
 r_t} = \left(\frac{1}{\ell_t} +\frac{1}{r_t}\right)^{-1} = \frac{2\sin(\theta) \sin \left ( i \theta \right )}{\cos \left ( (i-1) \theta \right )-\cos \left ( (i+1) \theta \right ) } =1,
\end{align*}
which implies 
\begin{align}
\label{eq:sinRelWN1}
\mathsf{C}_1(\mathcal{N})=1,
\end{align}
that is, the best relay in $\mathcal{N}$ has an approximate capacity of $1$. Finally, for every $t \in [0:n-1]$,
with 
$i=\lfloor \frac{t+1}{2}\rfloor$
\begin{align}
\ell_t + r_{t+2} &= \frac{2\sin(\theta) \sin \left ( i \theta \right ) }{\cos \left ( i \theta \right ) - \cos \left ( (i+1) \theta \right )} + \frac{2\sin(\theta) \sin \left ( (i+1) \theta \right )}{\cos \left ( i \theta \right )-\cos \left ( (i+1) \theta \right ) }\nonumber\\
&= 2\sin(\theta) \frac{ 2\sin \left ( \frac{(2i+1)\theta}{2} \right ) \cos\left(\frac{\theta}{2} \right) }{2\sin \left ( \frac{(2i+1)\theta}{2}\right ) \sin\left(\frac{\theta}{2}\right)}\nonumber\\
&= 4\cos^2 \left(\frac{\theta}{2}\right) = 2\cos(\theta)+2,
\label{eq:sum-L-R}
\end{align}
where we let $\ell_0 = r_{n+1}=0$.

Consider now a two-state schedule given by 
\begin{align*}
\lambda_{\cS} = \left\{
\begin{array}{ll}
\frac{1}{2} & \textrm{if $\cS = \cS_o = \{1,3,5,\dots, 2k-1\}$},\\
\frac{1}{2} & \textrm{if $\cS = \cS_e = \{2,4,6,\dots, 2k\}$},\\
0 & \textrm{otherwise.}
\end{array}
\right.
\end{align*}
The rate $\mathsf{R}_n(\mathcal{N}) $ achieved by  this two-state schedule can be found from \eqref{maxmincut}, and satisfies 
\begin{align}
\mathsf{R}_n(\mathcal{N}) &= \min_{\Omega \subseteq [1:n]} \sum_{\mathcal{S}\subseteq{[1:n]}} \lambda_{\mathcal{S}}\bigg( \max_{i \in \mathcal{S}^c \cap \Omega^c} \ell_i +\max_{i \in \mathcal{S} \cap \Omega} r_i \bigg)\nonumber\\
&= \min_{\Omega \subseteq [1:n]} \left\{\frac{1}{2}\left( \max_{i\in \cS_e \cap \Omega^c} \ell_i + \max_{i\in \cS_o \cap \Omega} r_i\right) 
+ \frac{1}{2}\left( \max_{i\in \cS_o \cap \Omega^c} \ell_i + \max_{i\in \cS_e \cap \Omega} r_i\right)\right\} \nonumber\\
&\stackrel{\rm{(a)}}{=} \min_{\Omega \subseteq [1:n]} \left\{ \frac{1}{2}\left(  \ell_t + \max_{i\in \cS_e \cap \Omega} r_i\right) 
+ \frac{1}{2}\left(   \ell_s + \max_{i\in \cS_o \cap \Omega} r_i \right)\right\}\nonumber\\
&\stackrel{\rm{(b)}}{\geq} \min_{\Omega \subseteq [1:n]} \left\{ \frac{1}{2}\left(  \ell_t + r_{t+2}\right) 
+ \frac{1}{2}\left(   \ell_s +   r_{s+2} \right)\right\}\nonumber\\
&\stackrel{\rm{(c)}}{=} \min_{\Omega \subseteq [1:n]} \left\{ \frac{1}{2}\left(  2\cos(\theta)+2\right) 
+ \frac{1}{2}\left(    2\cos(\theta)+2 \right)\right\}\nonumber\\
&=  2\cos(\theta)+2,
\label{eq:CLB}
\end{align} 
where in $\rm{(a)}$ we set $t=\max \cS_e \cap \Omega^c$ and $s=\max \cS_o \cap \Omega^c$, and $\rm{(b)}$ is due to the fact that if $t=\max \cS_e \cap \Omega^c$ then $t+2$ is an even number that belongs to $\Omega$, and similarly $s+2\in \cS_o \cap \Omega$. Finally in $\rm{(c)}$ we used the equality derived in \eqref{eq:sum-L-R}. Therefore, the rate of $ 2\cos(\theta)+2$ is achievable for this network. Moreover, note that the approximate capacity $\mathsf{C}_n(\mathcal{N})$ of a Gaussian half-duplex diamond $n$-relay network is always upper bounded by that of the same network when operated in full-duplex mode (i.e., each relay can transmit and receive simultaneously). Also, note that, for the network in~\eqref{eq:WN1}, we have that $r_1 = \max_{i \in [1:n]} r_i$. Hence, we have 
\begin{align}
\label{eq:CUB}
\mathsf{C}_n(\mathcal{N}) \leq \mathsf{C}_n^{{\rm{FD}}}(\mathcal{N}) \leq r_1= \frac{2\sin^2(\theta)}{1-\cos(\theta)} = 2\cos(\theta)+2. 
\end{align}
Finally, \eqref{eq:CLB} and \eqref{eq:CUB} imply $\mathsf{C}_n(\mathcal{N}) =  2\cos(\theta)+2$. This together with \eqref{eq:sinRelWN1} leads to
\begin{align} 
  \frac{\mathsf{C}_{1}(\mathcal{N})}{{\mathsf{C}}_{n}(\mathcal{N})} = \frac{1}{2\cos \left (\theta \right )+2} =  \frac{1}{2\cos \left (\frac{2\pi}{n+2} \right )+2}
\end{align}
for the network defined in~\eqref{eq:WN1}, and hence proves the tightness of the bound in~\eqref{ratio} when $n$ is even. Note that this network corresponds to Case~I of the network analysis in Appendix~\ref{app:4cases}, where $\beta_1^\star>0$ and $\beta_m^\star<\infty$. 
An example of the network construction in~\eqref{eq:WN1} for $n=6$ is provided in Fig.~\ref{fig:even-1}.

\noindent {\textbf{Case A.2: }} There is also another network for even values of $n=2k$ that achieves the bound in~\eqref{ratio}. This network is given by 
\begin{align}
\label{eq:WN1-2}
\begin{split}
\ell_1&=r_n=1,\qquad 
r_1 =\ell_n = L \rightarrow \infty, \\
\ell_{2i} &=\ell_{2i+1}= 
\frac{ \sin \left ( i \theta \right ) + \sin \left ( (i+1)\theta \right ) }{\sin \left ( (i+1) \theta \right )}, \qquad i \in [1:k-1],
\\
r_{2i} &=r_{2i+1} = 
\frac{ \sin \left ( i \theta \right ) + \sin \left ( (i+1)\theta \right ) }{\sin \left ( i \theta \right )}, \qquad i \in [1:k-1],
\\
\theta & = \frac{2 \pi}{n+2}.
\end{split}
\end{align}It is easy to check that for this network we also have $\mathsf{C}_1(\mathcal{N})=1$ and $\mathsf{C}_n(\mathcal{N}) = 2\cos (\theta) +2$, which can be achieved using the two-state schedule
\begin{align*}
\lambda_{\cS} = \left\{
\begin{array}{ll}
\frac{1}{2} & \textrm{if $\cS = \cS_o = \{3,5,\dots, 2k-1,2k\}$},\\
\frac{1}{2} & \textrm{if $\cS = \cS_e = \{2,4,6,\dots, 2k\}$},\\
0 & \textrm{otherwise.}
\end{array}
\right.
\end{align*}
Note that in this schedule relay $R_1$ is (asymptotically) always in receive mode and relay $R_n$ is always in transmit mode. This leads to 
\begin{align*} 
  \frac{\mathsf{C}_{1}(\mathcal{N})}{{\mathsf{C}}_{n}(\mathcal{N})} =  \frac{1}{\cos \left (\frac{2\pi}{n+2} \right )+2}.
\end{align*}
Note that this network corresponds to Case IV of the network analysis in Appendix~\ref{app:4cases}, where $\beta_1^\star=0$ and $\beta_m^\star=\infty$. The realization of this network configuration for  $n=6$ is provided in Fig.~\ref{fig:even-2}. 

\begin{figure}
	\centering
	\subfloat[]{\adjustbox{valign=c}{\includegraphics[width = 0.35\textwidth]{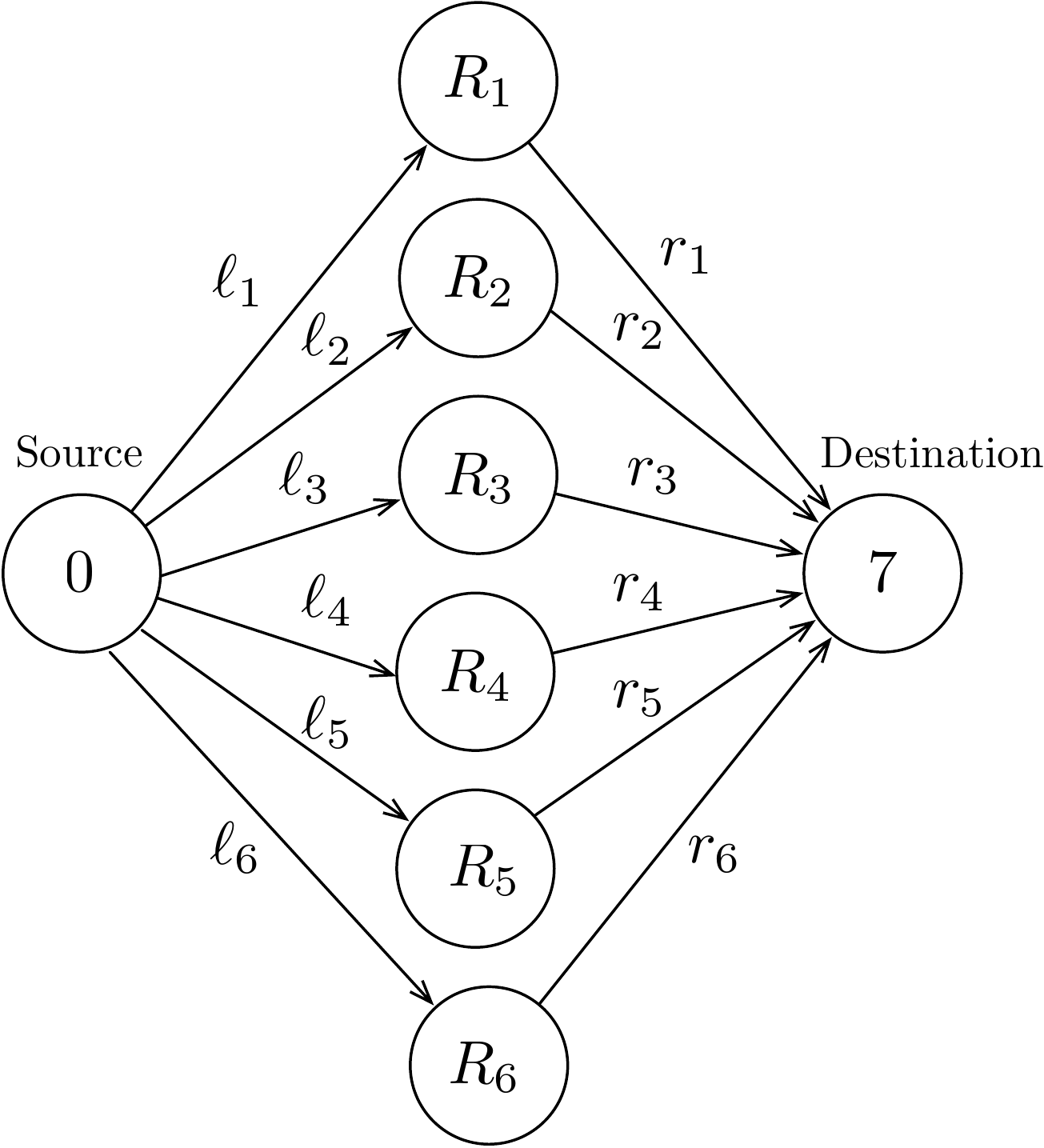}}}
	\hspace{5mm}
	\subfloat[\label{fig:even-1}]{	
	\adjustbox{valign=c}{\footnotesize{
	\begin{tabular}[h]{c||c|c}
    $i$ & $\ell_i$   & $r_i$\\
    \hline
    $1$ & $\sqrt{2}$ & $2+\sqrt{2}$\\
    $2$ & $\sqrt{2}$ & $2+\sqrt{2}$\\
    $3$ & $2$		 & $2$\\ 
    $4$ & $2$		 & $2$\\
    $5$ & $2+\sqrt{2}$ & $\sqrt{2}$\\
    $6$ & $2+\sqrt{2}$ & $\sqrt{2}$
    \end{tabular}  
	}}}
	\hspace{5mm}
	\subfloat[\label{fig:even-2}]{
	\adjustbox{valign=c}{
	\footnotesize{
    \begin{tabular}[h]{c||c|c}
    $i$ & $\ell_i$   & $r_i$\\
    \hline 
     $1$ & $1$ & $L\rightarrow \infty$\\
     $2$ & $\frac{2+\sqrt{2}}{2}$ & $1+\sqrt{2}$\\
     $3$ & $\frac{2+\sqrt{2}}{2}$ & $1+\sqrt{2}$\\
     $4$ & $1+\sqrt{2}$ & $\frac{2+\sqrt{2}}{2}$\\
     $5$ & $1+\sqrt{2}$ & $\frac{2+\sqrt{2}}{2}$\\
	$6$ & $L\rightarrow \infty$ & $1$\\
    \end{tabular}
   	 }}
	}
	\caption{Gaussian half-duplex diamond networks with $n=6$ relays for which the bound in~\eqref{ratio} is tight. The table in (b) shows the link capacities for the network defined  in \eqref{eq:WN1} and the table in (c) indicates the link capacities of the network given in \eqref{eq:WN1-2}.}
	\label{fig:WorstNetwork1}
	\vspace{-5mm}
\end{figure}

\noindent {\textbf{Case B.1: }} Let $n=2k+1$ be an odd number. We consider a Gaussian half-duplex diamond $n$-relay network $\mathcal{N}$ for which
\begin{align}
\label{eq:WN2}
\begin{split}
\ell_1 & = 1,
\qquad
r_1  = L \to \infty,
\\
\ell_{2i} & = \ell_{2i+1} = \frac{\sin \left ( i \theta \right ) + \sin \left ( (i+1)\theta\right )}{\sin \left ( (i+1) \theta \right )}, \qquad i \in \left [1:k \right],
\\
r_{2i} & = r_{2i+1} = \frac{\sin \left ( i \theta \right ) + \sin \left ( (i+1)\theta\right )}{\sin \left ( i \theta \right )}, \qquad i \in \left [1:k \right],
\\
\theta & = \frac{2 \pi}{n+2}.
\end{split}
\end{align}

Similar to Case A.1,  the network in~\eqref{eq:WN2} satisfies  $\ell_1 \leq \ell_2 \leq \ldots \leq \ell_n$ and
$r_1 \geq r_2 \geq \ldots \geq r_n$. Moreover, 
the single relay approximate capacities satisfy
\begin{align}
\label{eq:sinRelWN2}
\mathsf{C}_1(\mathcal{N}_i) = \frac{\ell_i r_i}{\ell_i + r_i}  = \left(\frac{1}{\ell_i} + \frac{1}{r_i}\right)^{-1} = 1,
\end{align}
for $i \in [1:n]$, which implies $\mathsf{C}_1(\mathcal{N})=1$, i.e., the best relay in $\mathcal{N}$ has unitary approximate capacity. Furthermore, for any $t\in[1:n]$ with $i=\lfloor t/2\rfloor$ we have 
\begin{align*}
\ell_t + r_{t+2} &= \frac{\sin \left ( i \theta \right ) + \sin \left ( (i+1)\theta\right )}{\sin \left ( (i+1) \theta \right )} + \frac{\sin \left ( (i+1) \theta \right ) + \sin \left ( (i+2)\theta\right )}{\sin \left ( (i+1) \theta \right )}\nonumber\\
&= \frac{2\sin \left ( (i+1) \theta \right ) + 2\sin \left ( (i+1)\theta\right ) \cos (\theta)}{\sin \left ( (i+1) \theta \right )} = 2\cos(\theta)+2.
\end{align*}
where we let $r_{n+1}=r_{n+2}=0$.
Therefore, similar to \eqref{eq:CLB} we can show that $\mathsf{R}_n(\mathcal{N}) = 2\cos(\theta)+2$ is achievable for this network, using the two-state schedule given by 
\begin{align*}
\lambda_{\cS} = \left\{
\begin{array}{ll}
\frac{1}{2} & \textrm{if $\cS = \cS_o = \{3,5,\dots, 2k+1\}$},\\
\frac{1}{2} & \textrm{if $\cS = \cS_e = \{2,4,6,\dots, 2k\}$},\\
0 & \textrm{otherwise.}
\end{array}
\right.
\end{align*}
Note that in this schedule, relay $R_1$ is (asymptotically) always receiving, since its transmit capacity is unboundedly greater than its receive capacity. Moreover, similar to \eqref{eq:CUB}, we can argue that $\mathsf{C}_n(\mathcal{N}) \leq \ell_n = 2\cos(\theta)+2$. Therefore, we get 
\begin{align*} 
\frac{\mathsf{C}_{1}(\mathcal{N})}{{\mathsf{C}}_{n}(\mathcal{N})} = \frac{1}{\cos \left (\frac{2\pi}{n+2} \right )+2},
\end{align*}
which proves the tightness of the bound in~\eqref{ratio} when $n$ is odd. Note that this network topology corresponds to Case~III of the network analysis in Appendix~\ref{app:4cases}.
An example of the network construction in~\eqref{eq:WN2} for $n=5$ is provided in Fig.~\ref{fig:odd-1}.

\noindent {\textbf{Case B.2: }} The second network configuration that satisfies the bound in~\eqref{ratio} with equality for an odd number of relays, i.e., $n=2k+1$, is given by
\begin{align}
\label{eq:WN2-2}
\begin{split}
\ell_{2i-1} =\ell_{2i}&= 
\frac{2\sin(\theta) \sin \left ( i \theta \right ) }{\cos \left ( i \theta \right ) - \cos \left ( (i+1) \theta \right )}, \qquad i \in [1:k],
\\
r_{2i-1} =r_{2i} &= 
\frac{2\sin(\theta) \sin \left ( i \theta \right )}{\cos \left ( (i-1) \theta \right )-\cos \left ( i \theta \right ) }, \qquad i \in [1:k],
\\
\ell_n&=L\rightarrow \infty, \qquad
r_n=1,\\
\theta & = \frac{2 \pi}{n+2}.
\end{split}
\end{align}
It is easy to see that this network also satisfies $\ell_1 \leq \ell_2 \leq \ldots \leq \ell_n$ and
$r_1 \geq r_2 \geq \ldots \geq r_n$.
Moreover, the approximate single relay capacities equal one, and hence $\mathsf{C}_1(\mathcal{N})=1$. Furthermore, the approximate capacity of the entire network is $\mathsf{C}_n(\mathcal{N}) = 2\cos(\theta)+2$, which can be achieved  using the two-state schedule given by 
\begin{align*}
\lambda_{\cS} = \left\{
\begin{array}{ll}
\frac{1}{2} & \textrm{if $\cS = \cS_o = \{1,3,5,\dots, 2k+1\}$},\\
\frac{1}{2} & \textrm{if $\cS = \cS_e = \{2,4,6,\dots, 2k, 2k+1\}$},\\
0 & \textrm{otherwise,}
\end{array}
\right.
\end{align*}
i.e., the relay node $R_n$ is always in transmit mode. This leads to 
\begin{align*} 
\frac{\mathsf{C}_{1}(\mathcal{N})}{{\mathsf{C}}_{n}(\mathcal{N})} = \frac{1}{\cos \left (\frac{2\pi}{n+2} \right )+2},
\end{align*}
which shows that the network in \eqref{eq:WN2-2} satisfies the bound in~\eqref{ratio} with equality.  Note that this network topology corresponds to Case~II of the network analysis in Appendix~\ref{app:4cases}. An example of such network for $n=5$ relay nodes is shown in Fig.~\ref{fig:odd-2}. It is worth noting that the two network topologies introduced for an odd number of relays are indeed identical up to flipping of the left and right point-to-point link capacities, and relabeling of the relays.

\begin{figure}
	\centering
	\subfloat[]{\adjustbox{valign=c}{\includegraphics[width = 0.35\textwidth]{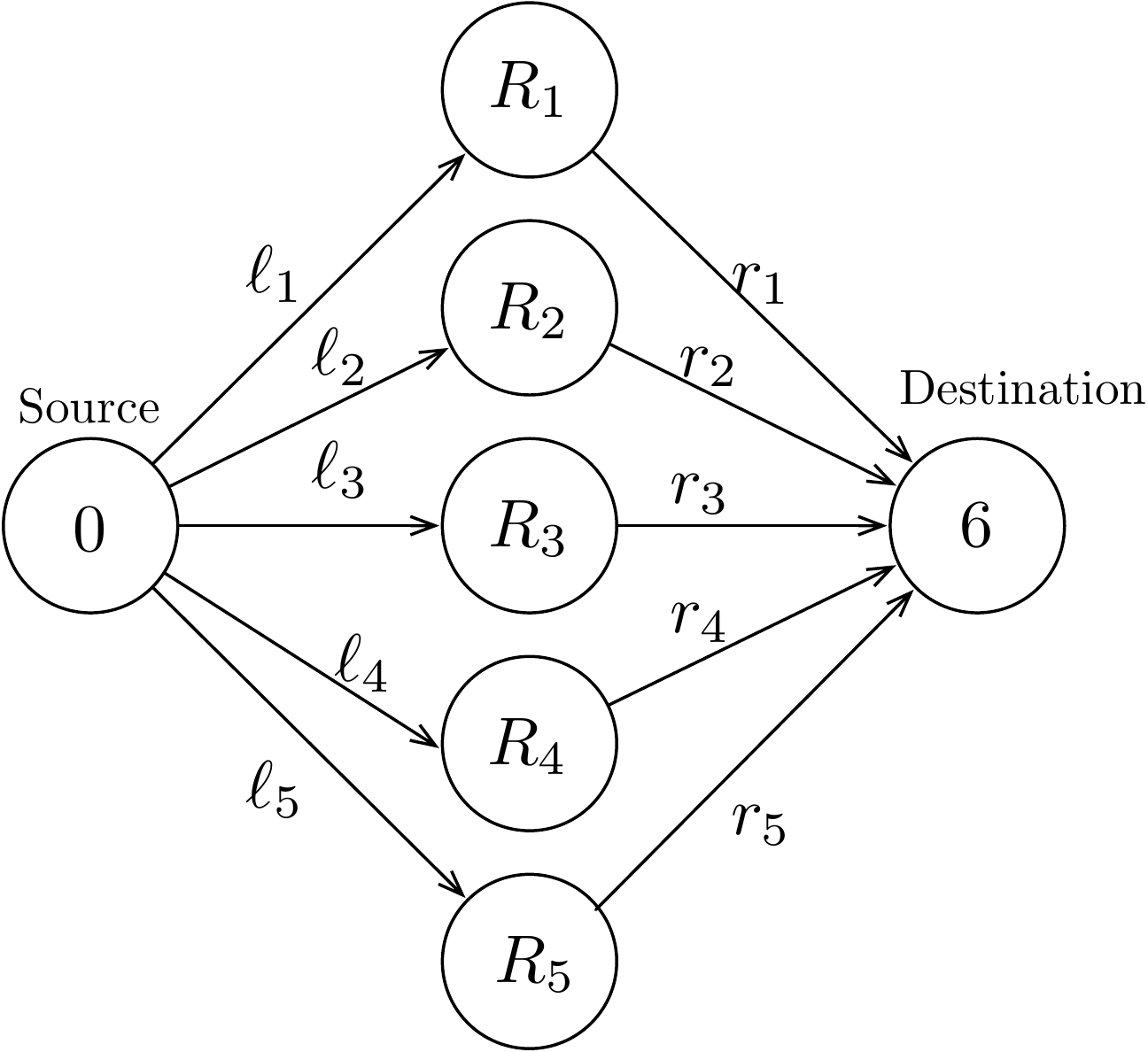}}}
	\hspace{5mm}
	\subfloat[\label{fig:odd-1}]{	
	\adjustbox{valign=c}{\footnotesize{
	\begin{tabular}[h]{c||c|c}
    $i$ & $\ell_i$   & $r_i$\\
    \hline
    $1$ & $1$ & $L\rightarrow \infty$\\
    $2$ & $1.8019$ & $2.2470$\\
    $3$ & $1.8019$ & $2.2470$\\
    $4$ & $3.2470$ & $1.4450$\\
    $5$ & $3.2470$ & $1.4450$\\
    \end{tabular}  
	}}}
	\hspace{5mm}
	\subfloat[\label{fig:odd-2}]{
	\adjustbox{valign=c}{
	\footnotesize{
    \begin{tabular}[h]{c||c|c}
    $i$ & $\ell_i$   & $r_i$\\
    \hline 
     $1$ & $1.4450$ & $3.2470$\\
     $2$ & $1.4450$ & $3.2470$\\     
     $3$ & $2.2470$ & $1.8019$\\
     $4$ & $2.2470$ & $1.8019$\\
     $5$ & $L\rightarrow \infty$ & $1$\\
    \end{tabular}
   	 }}
	}
	\caption{Gaussian half-duplex diamond networks with $n=5$ relays for which the bound in~\eqref{ratio} is tight. The table in (b) shows the link capacities of the network given in \eqref{eq:WN2} and the table in (c) indicates the link capacities of the network in \eqref{eq:WN2-2}.}
	\label{fig:WorstNetwork2}
	\vspace{-5mm}
\end{figure}

%

\appendices

\section{Proof of Proposition~\ref{lem:Smn}}
\label{app:4cases}
We consider the four possible cases, depending on the values of $\beta_1^\star$ and $\beta_m^\star$.

\noindent 
{\textbf{Case I: } $\beta_1^\star>0$ and $\beta_m^\star<\infty$.} 
Since $\beta_1^\star>0$, then from Lemma~\ref{lemm:OptSol}, we know that
\begin{align*}
1+ \frac{1}{\beta_1^\star} = G_0(\bm{\beta}^\star) = \mathrm{OPT}_5(m)\stackrel{\eqref{eq:sn}}{=} \sigma_{n,m} +2 \Rightarrow \frac{1}{\beta_1^\star} = \sigma_{n,m} +1.
\end{align*}
Moreover, using~\eqref{eq:B1g0} inside~\eqref{b_i}, we obtain
\begin{equation}
\label{eq:IS1}
\begin{split}
\left \{
\begin{array}{l}
uU^0+vV^0=b_0=1, \\
uU^1+vV^1=b_1=\frac{1}{\beta_1^\star} = \sigma_{n,m} +1,
\end{array}
\right .
\quad
\Rightarrow \quad 
\left \{
\begin{array}{l}
 u = \frac{U-1}{\sigma_{n,m}-2}, \\ 
v = \frac{V-1}{\sigma_{n,m}-2}.
\end{array}
\right .
\end{split}
\end{equation}
Then, since $\beta_m^\star<\infty$,  Lemma~\ref{lemm:OptSol} implies that 
\begin{equation*}   
    1+ \beta_m^\star = G_m(\bm{\beta}^\star) = \mathrm{OPT}_5(m) =   2+ \sigma_{n,m}, 
    \end{equation*}
    or equivalently, 
    \begin{equation*}
    \sigma_{n,m} + 1 = \beta_m^\star 
    =\frac{b_{m-1}}{b_m}=\frac{uU^{m-1}+vV^{m-1}}{uU^{m}+vV^m}.
\end{equation*}
Therefore, 
\begin{align}
    0&=u \left( U^m(\sigma_{n,m}+1)-U^{m-1}\right)+v \left(V^m(\sigma_{n,m}+1)-V^{m-1}\right) \nonumber
    \\&=uU^m(U+1)+vV^m(V+1), \label{eq:IS}
\end{align}
where the last equality follows since we have
\begin{align*}
 U^m(\sigma_{n,m}+1)-U^{m-1} =  U^{m-1} (U \sigma_{n,m} +U -1) \stackrel{{\rm{(a)}}}{=}  U^{m-1} (U^2 +U ) = U^m (U+1) ,
\end{align*}
and the equality in $\rm{(a)}$ follows from the characteristic function in~\eqref{eq:charfun}.
Therefore, since $UV=1$, from~\eqref{eq:IS} we obtain
\begin{equation*}
    U^{2m}=\left (\frac{U}{V}\right )^m=-\frac{v}{u}\frac{V+1}{U+1}\stackrel{{\rm{(b)}}}{=}-\frac{V-1}{U-1} \frac{V+1}{U+1}\stackrel{{\rm{(c)}}}{=}\frac{1}{U^2} \Rightarrow  U^{2m+2}=1,
\end{equation*}
where the equality in $\rm{(b)}$ follows by using the values in~\eqref{eq:IS1} for $u$ and $v$, and the equality in $\rm{(c)}$ follows by substituting $V=1/U$. Thus, we get $2m+2$ pairs of $(U,V)$, enumerated by a parameter $k\in [0:2m+1]$, given by 
\begin{equation*}
    U(k)=\exp \left (
    \frac{2k\pi i}{2m+2}\right ),
    \qquad
        V(k)=\exp \left (-
    \frac{2k\pi i}{2m+2}\right ).
\end{equation*}
Therefore, we have
\begin{equation*}
\sigma_{n,m}(k)=U(k)+V(k)=\exp \left (\frac{2k\pi j}{2m+2} \right )+\exp \left (-\frac{2k\pi j}{2m+2}\right )=2 \cos \left (\frac{2k \pi}{2m+2} \right ).
\end{equation*}
Note that $\sigma_{n,m}$ above is a function of $k$.
However, the choice of $k=0$ leads to $U=V=1$ and $\sigma_{n,m}=2$ which is an invalid choice (see Footnote~\ref{ftnt:sigma2}). Other than that, for every given $m$ we have
\begin{align*}
\sigma_{n,m} = \max_{\substack{k\in [0:2m+1]\\ k\neq 0}} \sigma_{n,m}(k) = \sigma_{n,m}(1) = 2 \cos \left (\frac{2 \pi}{2m+2} \right ),
\end{align*}
which proves our claim in Proposition~\ref{lem:Smn} when $\beta_1^\star>0$ and $\beta_m^\star<\infty$.

\noindent{\textbf{Case II: } $\beta_1^\star>0$ and $\beta_m^\star=\infty$.} The initial condition of the recurrence relation are identical to that of Case I. Hence, we get $b_i = u U^i + v V^i$, where $u$ and $v$ are given in \eqref{eq:IS1}. 
Moreover, $\beta_m^\star=\infty$ implies $b_m=0$. Substituting this in~\eqref{b_i} for $i=m$ leads to
\begin{equation*}
    0=b_m=uU^m+vV^m,
\end{equation*}
which implies
\begin{equation*}
    U^{2m}\stackrel{{\rm{(a)}}}{=} \left (\frac{U}{V} \right )^m=-\frac{v}{u}\stackrel{{\rm{(b)}}}{=}-\frac{V-1}{U-1}\stackrel{{\rm{(a)}}}{=}\frac{1}{U} \Rightarrow U^{2m+1} =1,
\end{equation*}
where the equalities in $\rm{(a)}$ are due to the fact that $V=1/U$, and that in $\rm{(b)}$ follows from~\eqref{eq:IS1}. Thus,
\begin{align*}
U(k)=\exp{\left (\frac{2k\pi j}{2m+1}\right)}, \qquad  V(k)=\exp{\left (-\frac{2k\pi j}{2m+1}\right)},
\end{align*}
and hence,
\begin{equation*}
    \sigma_{n,m}(k)=U(k)+V(k)=\exp{\left(\frac{2k\pi j}{2m+1}\right)}+\exp{\left(-\frac{2k\pi j}{2m+1}\right)}=2\cos{\left(\frac{2k\pi }{2m+1}\right)},
\end{equation*}
for $k \in [0:2m]$. 
Maximizing $\sigma_{n,m}(k)$ we get 
\begin{equation*}
    \sigma_{n,m}=\max_{\substack{k \in [0:2m] \\ k\neq 0}} \sigma_{n,m}(k)= \sigma_{n,m}(1) = 2\cos{\left(\frac{2\pi }{2m+1}\right)},
\end{equation*}
as claimed in Proposition~\ref{lem:Smn}.

\noindent{\textbf{Case III: } $\beta_1^\star=0$ and $\beta_m^\star<\infty$.} When $\beta_1^\star=0$, the initial conditions of the recurrence  equation are given in~\eqref{eq:B1e0}. We have
\begin{align}
\label{eq:ISC3}
\begin{split}
\left \{
\begin{array}{l}
uU^0+vV^0=b_0=0, \\
uU^1+vV^1=b_1=1, 
\end{array}
\right.
\quad
\Rightarrow \quad 
\left \{
\begin{array}{l}
u=\frac{1}{\sqrt{\sigma_{n,m}^2-4}}, \\
v=-\frac{1}{\sqrt{\sigma_{n,m}^2-4}} .
\end{array}
\right .
\end{split}
\end{align}
Moreover, Lemma~\ref{lemm:OptSol} for $\beta_m^\star<\infty$ implies 
\begin{align*}
1+ \beta_m^\star = G_m(\bm{\beta}^\star) = \mathrm{OPT}_5(m) = 2+\sigma_{n,m} \quad \Rightarrow \quad 1+ \sigma_{n,m}=\beta_m^\star=\frac{b_{m-1}}{b_m}.
\end{align*}
Hence, 
\begin{align*}
uU^{m-1} + v V^{m-1} = b_{m-1} = (1+ \sigma_{n,m}) b_{m} = (1+ \sigma_{n,m})\left(uU^{m} + v V^{m} \right)
\end{align*}
or equivalently,
\begin{align*}
uU^{m-1} (U+ \sigma_{n,m} U -1 )+ v V^{m-1} (V+ \sigma_{n,m} V -1 ) 
\stackrel{{\rm{(a)}}}{=} uU^{m-1} (U+ U^2 )+ v V^{m-1} (V+ V^2 ) 
= 0,   
\end{align*}
where $\rm{(a)}$ follows from the fact that $U$ and $V$ are the roots of the characteristic function in~\eqref{eq:charfun}. 
Therefore,  we get
\begin{equation*}
    U^{2m}=\left(\frac{U}{V}\right )^m=-\frac{v}{u} \frac{V+1}{U+1} \stackrel{{\rm{(a)}}}{=} \frac{1}{U} \Rightarrow U^{2m+1}=1,
\end{equation*}
where the equality in $\rm{(a)}$ follows from~\eqref{eq:ISC3}. Therefore, similar to Case~II, we get
\begin{equation*}
    \sigma_{n,m}=2 \cos{\left(\frac{2\pi}{2m+1}\right)},
\end{equation*}
which proves our claim in Proposition~\ref{lem:Smn}.

\noindent{\textbf{Case IV: } $\beta_1^\star=0$ and $\beta_m^\star=\infty$.} Since $\beta_1^\star=0$ , the initial conditions of this case are identical to those of Case~III given in~\ref{eq:ISC3}.  However, from $\beta_m^\star=\infty$ we have $b_m=0$, which implies
\begin{equation*}
    0=b_m=uU^m+v V^m.
\end{equation*}
This leads to
\begin{equation}
    U^{2m}=\left(\frac{U}{V}\right)^m=-\frac{v}{u}\stackrel{{\rm{(a)}}}{=}1 \Rightarrow U^{2m}=1,
\end{equation}
where the equality in $\rm{(a)}$ follows by using~\eqref{eq:ISC3}. Thus,
\begin{align*}
U(k)=\exp{\left(\frac{2k\pi j}{2m}\right)}, \qquad V(k)=\exp{\left(-\frac{2k\pi j}{2m}\right)},
\end{align*}
and
\begin{equation*}
    \sigma_{n,m}(k)=U(k)+V(k)= 2 \cos{\left(\frac{2k\pi}{2m}\right)},
\end{equation*}
for some $k \in [0:2m-1]$. Maximizing $\sigma_{n,m}(k)$ over $k\neq 0$ we get 
\begin{equation}
    \sigma_{n,m}= \max_{\substack{k \in [0:2m-1] \\ k\neq 0}} \sigma_{n,m}(k)= \sigma_{n,m}(1)  = 2 \cos{\left(\frac{2\pi}{2m}\right)}.
\end{equation}
This proves our claim in Proposition~\ref{lem:Smn}, for the forth case when $\beta_1^\star=0$ and $\beta_m^\star = \infty$.

 \bibliography{ReviewBibSJ1.bib}
 \bibliographystyle{IEEEtran}

\end{document}